\newcommand{\lleft}{\left}
\newcommand{\rrvert}{\vert}
\newcommand{\rright}{\right}
\newcommand{\llvert}{\vert}
\def\nicefrac{\frac}
\newtheorem{Lemm}{Lemma}[section]
\newtheorem{Lemmm}{Lemma}
\newtheorem{Th}[Def]{Theorem}
\newtheorem{Thhh}{Theorem}
\newtheorem{Thhhh}{Theorem}
\newtheorem{Thhhhhh}{Theorem}
\newtheorem{thmmmmm}{Theorem}
\newtheorem{cor}[Def]{Corollary}
\newtheorem{assumption}[Def]{Assumption}
\newcommand{\var}{\operatorname{Var}}
\newcommand{\sign}{\operatorname{sign}}
\newcommand{\lw}{\ell}
\newtheorem{thmm}[Def]{Theorem}
\def\CC{\mathbb{C}}
\def\SS{\mathbb{S}}
\def\RR{\mathbb{R}}
\def\ra{\rightarrow}
\def\z{\mathbf{z}}
\def\ZZ{\mathbb{Z}}
\def\sgn{\operatorname{sgn}}
\def\var{\operatorname{var}}
\def\x{\mathbf{x}}
\def\ve{\varepsilon}
\def\cv{\mathfrak{c}}
\def\ssp{\mathrm{ssp}}
\begin{document}
\begin{frontmatter}

\title{The circular SiZer, inferred persistence of shape parameters and
application to early stem cell differentiation}
\runtitle{The WiZer}

\begin{aug}
\author[a]{\inits{S.}\fnms{Stephan} \snm{Huckemann}\corref{}\thanksref{a,e1}\ead[label=e1,mark]{huckeman@math.uni-goettingen.de}},
\author[b1]{\inits{K.-R.}\fnms{Kwang-Rae} \snm{Kim}\thanksref{b1,e2}\ead[label=e2,mark]{Kwang-rae.Kim@nottingham.ac.uk}},
\author[b2]{\inits{A.}\fnms{Axel} \snm{Munk}\thanksref{b2,e3}\ead[label=e3,mark]{munk@math.uni-goettingen.de}},
\author[b3]{\inits{F.}\fnms{Florian} \snm{Rehfeldt}\thanksref{b3,e4}\ead[label=e4,mark]{rehfeldt@physik3.gwdg.de}},
\author[a]{\inits{M.}\fnms{Max} \snm{Sommerfeld}\thanksref{a,e5}\ead[label=e5,mark]{max.sommerfeld@math.uni-goettingen.de}},
\author[b5]{\inits{J.}\fnms{Joachim}~\snm{Weickert}\thanksref{b5,e6}\ead[label=e6,mark]{weickert@mia.uni-saarland.de}}
\and
\author[b3]{\inits{C.}\fnms{Carina} \snm{Wollnik}\thanksref{b3,e7}\ead[label=e7,mark]{carina.wollnik@phys.uni-goettingen.de}}
\address[a]{Felix Bernstein Institute for Mathematical Statistics in
the Biosciences, University of G\"ottingen.\\ \printead{e1,e5}}
\address[b1]{School of Mathematical Sciences,\hspace*{-0.5pt} University of Nottingham.\hspace*{-0.5pt}\\
\printead{e2}}
\address[b2]{Max Planck Institute for Biophysical Chemistry, G\"
ottingen and Felix Bernstein Institute for Mathematical Statistics in
the Biosciences, University of G\"ottingen.\\ \printead{e3}}
\address[b3]{3rd Institute of Physics -- Biophysics, University of G\"
ottingen.\\ \printead{e4,e7}}
\address[b5]{Faculty of Mathematics and Computer Science, Saarland
University.\\ \printead{e6}}\vspace*{-3pt}
\runauthor{S. Huckemann et al.}
%
\end{aug}

%
\received{\smonth{4} \syear{2014}}
%
\revised{\smonth{11} \syear{2014}}

%
\begin{abstract}
We generalize the SiZer of Chaudhuri and Marron
(\textit{J. Amer. Statist. Assoc.} \textbf{94} (1999) 807--823;
\textit{Ann. Statist.} \textbf{28} (2000) 408--428) for the
detection of shape parameters of densities on the real line to the case
of circular data. It turns out that only the wrapped Gaussian kernel
gives a
symmetric, strongly Lipschitz semi-group satisfying ``circular''
causality, that is, not introducing possibly artificial modes with
increasing levels of smoothing. Some notable differences between
Euclidean and circular scale space theory are highlighted. Based on
this, we provide an asymptotic theory to make
inference about the persistence of shape features. The resulting
circular mode persistence diagram is applied to the analysis of early
mechanically-induced differentiation in adult human stem cells from
their actin-myosin filament structure. As a consequence, the circular
SiZer based on the wrapped Gaussian kernel (WiZer) allows the
verification at a controlled error level of the observation reported by
Zemel \textit{et~al.}
(\textit{Nat. Phys.} \textbf{6} (2010) 468--473):
Within early stem cell differentiation,
polarizations of stem cells exhibit preferred directions in three
different micro-environments.
\end{abstract}


\begin{keyword}
\kwd{circular data}
\kwd{circular scale spaces}
\kwd{mode hunting}
\kwd{multiscale process}
\kwd{persistence inference}
\kwd{stem cell differentiation}
\kwd{variation diminishing}
\kwd{wrapped Gaussian kernel estimator}
\end{keyword}
\end{frontmatter}

\section{Introduction}\label{sec1}

Mode (maxima and minima) and bump (maxima of the derivative) hunting of
a density has a long history in statistical research and has been
tackled from various perspectives.
Good and Gaskins \cite{GoodGaskins1980} argued that actually this
would be a problem of
significance testing rather than estimation, and, to some extent, we
agree. We stress, however, that
for practical purposes, it seems attractive to accompany any testing
decision on the number of modes, say, with an estimator and
corresponding visualization tools which
are in agreement as much as possible with such a test decision.
Indeed, many tests which have been developed (mainly in the context of
detecting modes of a density) implicitly or explicitly offer this
additional information or parts of it.\vadjust{\goodbreak}
Most of them are based on smoothing techniques with variable bandwidth
which provides a reconstruction of modes and other shape features at a
range of scales. Prominent methods include the critical bandwidth test
of Silverman \cite{Silverman1981} (see also Ahmed and Walther \cite
{AhmedWalter2012} for a
generalization to multivariate data), the dip test by Hartigan and
Hartigan \cite
{HartiganHartigan1985}, the excess mass approach by M{\"u}ller and
Sawitzki \cite
{MullerSawitzki1991}, see also Polonik \cite{Polonik1995}, the test of
Cheng and Hall \cite
{ChengHall1999}, the SiZer (SIgnificant ZERo crossings of the
derivatives) by Chaudhuri and Marron \cite{ChaudhuriMarron1999,ChaudhuriMarron2000}, or the
mode tree of Minnotte and Scott \cite{MinnotteScott1993}; see also
Minnotte \cite{Minnotte1997},
Ooi \cite{Ooi2002} and Klemel{\"a} \cite{Klemela2006} for extensions
and related ideas.
More recently,
more sophisticated multiscale methods which do not rely on variable
bandwidth kernel estimators have been developed for this purpose as
well, for example, D{\"u}mbgen and Spokoiny \cite
{DumbgenSpokoiny2001}, Davies and Kovac \cite{DaviesKovac2004}, D{\"
u}mbgen and Walther \cite{DumbgenWalther2008},
Schmidt-Hieber \textit{et al.} \cite{SHMD13}.

In this paper, we are concerned with circular data for which rigorous
inferential methods on the number and location of modes have not been
established yet to the best of our knowledge. Recently, Oliveira \textit{et al.} \cite
{OliveiraCrujeirasRodrigues-Casal2013} suggested a circular version of
the SiZer, and argued that this is of particular use for their problem
at hand, the analysis of Atlantic wind speeds and directions, however,
without providing a circular scale space theory or methods assessing
the statistical significance of empirically found modes. Also in the
application of the present paper, there are both biological and
practical reasons that let us also favor the relatively simple circular
SiZer's methodology although we are aware of some potential drawbacks,
for example, a loss in power and asymptotic accuracy for small scales.
In the below study of early differentiation of human mesenchymal stem
cells, the structure characteristic for a specific cell type appears to
be of the relative size captured by the first few
largest modes while modes on much smaller scales most likely feature
individual cell effects, not of immediate interest. The circular SiZer
will allow to investigate the coarser mode structure and relative
importance of modes in terms of bandwidth, which renders it a
relatively simple tool for visualization of dynamics of modes through
circular scales.

\textit{Causality.}
For data on the real line, it has been shown by Chaudhuri and Marron
\cite
{ChaudhuriMarron2000} that the SiZer controls the estimated modes at a
large range of scales (represented by a bandwidth~$h$) as long as these
are above a smallest scale $h_0>0$, say. Crucial for a valid
interpretation of these modes is ``causality'' which prevents the
estimator from creating artifacts with increased smoothing.
More specifically, causality of a family of kernels $\{L_h:h>0\}$ is
the requirement that for any integrable function $f$,
%
\begin{equation}
\label{var-dimi-intro:def} h\mapsto\#\ \mathrm {Modes}(L_h*f)\qquad\mbox{is decreasing.}
\end{equation}
Note that on the circle, the number of modes of a differentiable
function is simply half the number of sign changes of its derivatives
or less (cf. Definition~\ref{mode-reducing:def}).

On the real line, it is well known that the Gaussian kernel yields the
only causal family under suitable assumptions (Lindeberg \cite{Lindeberg1994}).
Nevertheless, violation of causality will usually only matter for
relatively small ``critical'' scales (Hall \textit{et al.} \cite
{hallminottezhang04}), and
other (compactly supported) kernels may be used without too much
concern as long as the true number of modes is small. However, for the
SiZer it appears to be difficult to decide for given data whether and
when such a critical scale is achieved, and hence causality violation
may be a concern.

\begin{figure}[b]

\includegraphics{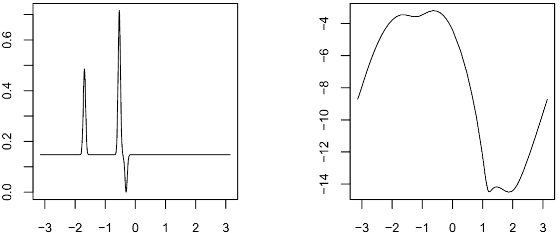}

\caption{The von Mises kernel is not causal (mode reducing)
on the circle; displaying on the right the tri-modal circular density
(its nonuniform part in log-scale) of a convolution with a von Mises
kernel ($\kappa=6$) of the bi-modal circular distribution displayed on
the left (its density is the combination of two spikes and a cleft with
a uniform distribution).}\label{von-Mises:fig}
\end{figure}

The above quoted circular SiZer method of Oliveira \textit{et al.} \cite
{OliveiraCrujeirasRodrigues-Casal2013} is a numerical extension of the
SiZer methodology from the line to the circle based on the von-Mises
kernel $g(e^{it}) \sim\exp(\kappa\cos(t - \mu))$ 
with variable concentration parameter $\kappa$ playing the role of the
(inverse) bandwidth. The von Mises density is often considered as the
natural counterpart of the Gaussian on the circle, for example,
maximizing entropy with given mean ({Mardia} and {Jupp} \cite{mardiajupp00},
Section~3.5.4).
For circular nonparametric smoothing, this kernel has also been used,
for example, in Fisher and Marron \cite{FisherMarron2001}, Taylor
\cite{Taylor2008}. However, Lemma~1 of
Munk \cite{Munk1999} states that the von Mises kernel is not variation
diminishing (i.e., the number of sign changes is not nonincreasing
under smoothing, cf. Definition~\ref{variation-diminishing:def}) for
$\kappa>1/2$ (the precise statement is that ``order 2 variation
diminishing''~-- which implies variation diminishing -- holds if and
only if $0\leq\kappa\leq1/2$). From our Theorem~\ref
{various-axioms:th}, it follows now that a convolution with a von Mises
distribution with concentration parameter $\kappa> 1/2$ violates
circular causality (\ref{var-dimi-intro:def}). Figure~\ref{von-Mises:fig} shows such an example where the convolution
with a von Mises kernel increases the number of modes (maxima) from two
to three.
As a consequence, the von Mises kernel seems questionable in the
context of the SiZer since a lower bound for the number of modes of the
smoothed density does not imply a lower bound for those of the true
underlying density. Therefore, the question remains whether a circular
causal kernel exists admitting a circular scale space approach.

\textit{Circular scale spaces -- The uniqueness of the wrapped Gaussian.}
In this paper, we give an affirmative answer to this question. Let
%
\begin{equation}
\label{wrapGauss:eq} K_{h}\bigl(e^{it}\bigr)=\frac{1}{\sqrt{h}}
\sum_{k\in\ZZ}\phi \biggl(\frac
{t+2\pi
k}{\sqrt{h}} \biggr)
\end{equation}
be the wrapped Gaussian kernel (on the circle) with bandwidth $h>0$
where $\phi$ denotes the standard normal density. We will show that the
wrapped Gaussian
is the only circular causal kernel. Therefore, from this perspective it
can be viewed as the most natural analog to the normal density on the
real line. This result requires some preparation.
We will:
\begin{longlist}[(1)]
\item[(1)] propose circular scale space axiomatics and discuss its
relationships to circular variation and mode reducing properties,
\item[(2)] show that under reasonable assumptions the wrapped Gaussian
kernel gives the one and only semigroup guaranteeing causality.
\end{longlist}
This will then allow us to:
\begin{longlist}[(3)]
\item[(3)] assess asymptotically the statistical significance of shape
features obtained from the WiZer,
\item[(4)] and to infer persistence over smoothing scales of shape features.
\end{longlist}

The analogs of (1) and (2) are well investigated in the linear case (cf.
Lindeberg \cite{Lindeberg1994}, Weickert \textit{et al.} \cite
{WeickertIshikawaAtuschi1999}). This result also
complements numerous linear scale-space axiomatics for images that have
attempted to establish uniqueness of the Gaussian kernel; see, for
example, Weickert \textit{et al.}~\cite
{WeickertIshikawaAtuschi1999} for an overview. However,
it should
be noted that a justification of Gaussian smoothing via a variation
diminishing axiom is only possible for 1-D signals (cf. Babaud
\textit{et~al.} \cite
{BWBD86}). It is well known that for 2-D images, Gaussian convolution
can create new extrema (see Lindeberg \cite{Lindeberg1994}). To avoid
this, one
has to consider scale-spaces based on nonlinear partial differential
equations such as mean curvature motion or the affine morphological
scale-space. For more details, we refer to Alvarez
\textit{et~al.} \cite{AGLM93}.
In contrast to the Euclidean case, circular scale spaces have
hardly been considered so far. Notable exceptions are a periodic
scale-space filtering by Wada \textit{et al.} \cite{WGS91} and a feature
extraction study by
Briggs \textit{et~al.} \cite{BDMS04}. However, there is no axiomatic
foundation of circular scale-spaces in terms of a variation
diminishing property. We will show in the following that on the circle
the situation is indeed peculiar and, for example, causality is fully
equivalent to the circular diminishing property. In summary, as a
byproduct of this paper, the circular case fills a gap of classical
scale space axiomatics.

\textit{Statistical guarantees.}
Concerning (3) above, we find for the SiZer based on wrapped Gaussians
that the probability of overestimating the true number of modes on
\emph
{any} scale can be bounded by some small number $\alpha$, say, while
the probability of underestimating the true number goes to zero for any
fixed bandwidth as the sample size goes to infinity.
Causality as in (\ref{var-dimi-intro:def}) ensures that, if we find
significant evidence for, say, $k$ circular modes in the smoothed
density, we automatically have significant evidence that there are also
at least $k$ modes in the true density.

\textit{Inferred mode persistence diagrams.}
Of particular importance are those scales where the lower bounds on
the number of inferred modes change and the intervals of scales in
which these bounds are constant. More general, under causality, the
latter give a notion of \emph{significant scale space persistences of
shape features} over smoothing scales. In fact, these expand the
general scheme of topological persistence introduced by Edelsbrunner \textit{et al.} \cite
{EdelsbrunnerLetscherZomorodian2002}, cf. also Ghrist \cite
{Ghrist2008}, Carlsson \cite{Carlsson2009} among many others. This originally
deterministic scheme is currently gaining high momentum in statistics
and medical imaging (e.g., Chung \textit{et al.} \cite{ChungBubenikKim2009},
Heo \textit{et al.} \cite{HeoGambleKim2012}).
Illustrated by the application at hand,
we propose \emph{inferred persistence diagrams} simultaneously
depicting significant bandwiths for
\emph{births} and \emph{splits} of modes. In addition to statistically
estimating (as done for persistent homologies by Bubenik and Kim \cite
{BubenikKim2007}), we also provide for confidence statements for the
estimate being a lower bound. This is related to recent work by
Balakrishnan
\textit{et~al.}
\cite
{BalakrishnanFasyLecciRinaldiAartiWasserman2013} who provide for
inference on persistence diagrams and by Schwartzman \textit{et al.} \cite
{SchwartzmanGavrilovAdler2011,SchwartzmanJaffeGavrilovMeyer2013} for
inferential mode detection of linear densities within specific classes.
Complementing this, however, we are firstly ``truly'' circular. Secondly,
we do not infer on the support of the data but on the shape of its very
density, and thirdly, for this density we require no assumptions.

\textit{A measure for early stem cell differentiation.}
Utilizing inferred persistence, we give a proof of concept to use
persistence diagrams of shape parameters -- namely modes -- to
elucidate precisely how the elasticity of the micro-environment directs
early human mesenchymal stem cell (hMSC)
differentiation;
hMSCs from bone marrow are considered highly promising for
regenerative medicine and tissue engineering. However, for successful
therapeutic applications understanding and controlling of
differentiation mechanisms is of paramount importance. The seminal
study of Engler
\textit{et~al.} \cite{EnglerSenSweeneyDisher2006} demonstrated that the
mechanical properties (Young's modulus $E$) of the micro-environment
can direct stem cell differentiation.
In Section~\ref{application:scn}, we re-analyze the fluorescence
microscopy images of nonmuscle myosin IIa in stem cells from Zemel
\textit{et~al.} \cite
{Zemel2010a}. Here, hMSCs have been cultured for 24~hours on substrates
of varying Young's modulus $E$ mimicking the physiological mechanical
properties of different tissue {in vivo}: 1~kPa corresponding to
neural cells and brain tissue, 11~kPa corresponding to muscle fibres
and 34~kPa corresponding to pre-mineralized collagen densifications in
bone, cf. Rehfeldt \textit{et~al.} \cite{Rehfeldt2007}. From these images, we extract
filament-process valued data\footnote{Which are provided for along with
the WiZer R-package
including inferred persistences
on the website of the second author,
\surl{http://rayerk.weebly.com/files.html}.} using the filament sensor of Eltzner
\textit{et~al.}
\cite
{EltznerWollnikGottschlichHuckemannRehfeldt2014}, in particular total
fibre length over angular directions, which live on the circle.
We do not only reproduce the observed nonmonotonicity over matrix
rigidity, now for mode persistences, but also, more subtly, the
respective mode-persistence diagrams indeed reflect different matrix
rigidities elucidating the impact of micro-environment rigidities on
early stem cell differentiation.

\textit{Plan of the paper.} In the following Section~\ref{scale-space:scn}, we begin by establishing circular scale space
axiomatics, while the details on circular sign changes have been
deferred to Appendix~\hyperref[appa]{A}. All of the proofs of the theorems subsequently
developed in this and in Section~\ref{inference:scn} have been deferred
to Appendix~\hyperref[appb]{B}. For the application to human stem cell differentiation
in Section~\ref{application:scn}, in Appendix~\hyperref[appc]{C} we give a numerical
foundation for the choice of the number of wrappings required to
compute the $p$-values underlying the tests performed by the WiZer with
a desired accuracy. For the bandwidths considered in our application,
an error less than $10^{-4}$ can be achieved with six wrappings.

\section{Circular scale space theory}\label{scale-space:scn}

\subsection{Notation}
Let $
\SS=\{z\in\CC:|z|=1\}
$ be the unit circle in the complex plane $\CC$
equipped with the measure $d\mu(z) = \frac{dt}{2\pi}$ for $z=e^{it}$,
with $t\in[0,2\pi)$ of the uniform distribution on $\SS$. This is the
normalized Haar measure on the compact Abelian group $\SS$ with the
ordinary multiplication of complex numbers. Then
\[
\hat f_k:= \int_\SS f(z)z^{-k} \,d
\mu(z)
\]
is the $k$th Fourier coefficient, $k\in\mathbb Z$, of a function $f\in
L^1(\SS)$.
Moreover, for $f,g\in L^1(\SS)$ we have the convolution
\[
(g*f) (w) = \int_\SS g\bigl(wz^{-1}\bigr)f(z) \,d
\mu(z) = \frac{1}{2\pi}\int_0^{2\pi}g
\bigl(e^{i(s-t)}\bigr) f\bigl(e^{it}\bigr) \,dt,\qquad w=e^{is}
\in\SS,
\]
which is well defined for $w\in\SS$ $\mu$-a.e. and in $L^1(\SS)$.

In particular, every function $g \in L^1(\SS)$ with $\int_\SS g(z)
\,d\mu
(z)=1$ is a \emph{circular kernel} as it conveys a bounded operator $g:
L^1(\SS) \to L^1(\SS), f\mapsto g*f$. A family of circular kernels $\{
L_h:h>0\}$ generates from every $f\in L^1(\SS)$ a \emph{scale space tube}
\[
\bigl\{L_h*f (z) : z\in\SS, h>0\bigr\}.
\]
We write $\hat L_{h,k}$ for the $k$th Fourier coefficients of $L_h$,
$k\in\mathbb Z$.

Finally, we say that a function $f:\SS\ra\RR$ is differentiable if the
\emph{counter-clockwise derivatives}
\[
(Df) (z )=\frac{d}{dt}\Big|_{t=0} \bigl(t\mapsto f
\bigl(e^{it}z \bigr) \bigr)
\]
exist for all $z\in\SS$, it is $m$-times differentiable if the $m$th
counter-clockwise derivatives exist which are defined iteratively as
$D^{m}f=D (D^{m-1}f )$.

\subsection{Regular scale space tube axiomatics}

In the following, $S_c(f)$ denotes the number of \emph{cyclic sign
changes} of a function $f:\SS\to\mathbb R$ as introduced by
Mairhuber \textit{et al.} \cite
{MairhuberSchoenbergWilliamson1959} and detailed in the \hyperref[app]{Appendix}.

\begin{Def}\label{variation-diminishing:def} A family of circular
kernels $\{L_h:h>0\}$ is
\begin{longlist}[(SG)]
\item[(SG)] a \emph{semigroup} if $L_h*L_{h'} = L_{h+h'}$ for all $h,h'>0$,
\item[(VD)] \emph{variation diminishing} if
\[
S_c(L_h*f)\leq S_c(f)
\]
for every $f\in L^1(\SS)$ and $h>0$,
\item[(SM)] \emph{symmetric} if $L_h(z) = L_h(z^{-1})$ for all $z\in
\SS
$ and $h>0$,
\item[(SL)] \emph{strongly Lipschitz} if there exists $r>0$ such that
the limit
\[
\lim_{h\ra0} \biggl(\frac{\hat{L}_{h,k}-1}{h|k|^r} \biggr)_{k\in\ZZ}
\]
exists in the space $\lw^\infty(\ZZ)$ of bounded sequences on $\ZZ$.
\end{longlist}
A scale space tube generated by a strongly Lipschitz, symmetric and
variation diminishing semigroup will be called a \emph{regular scale
space tube}.
\end{Def}

Straightforward computation gives for the wrapped Gaussian kernel that
$\hat K_{h,k} = e^{-k^2h/2}$, and hence the following.

\begin{Rm}\label{gauss-is-regular:rm}
The family of wrapped Gaussian kernels is a symmetric semigroup that is
strongly Lipschitz with $r=2$.
\end{Rm}

Of special interest are those families $\{L_h: h>0\}$ of kernels that
are \emph{differentiable} in the sense that
\[
\partial_h (L_h*f) = \lim_{t\to0}
\frac{L_{h+t}*f-L_h*f}{t}
\]
exists for all $h>0$ and $f\in C^\infty(\SS)$. The following is a
straightforward consequence of the definition.

\begin{Rm} \label{rem:diffSemiGroup}A semigroup of circular, strongly
Lipschitz kernels is differentiable. To see this, consider the Fourier
coefficients
\[
\hat{ \bigl( (L_{h+t}*f-L_h*f )/t \bigr)_k}
= \hat{L}_{h,k} (\hat{L}_{t,k}\hat{f}_k-
\hat{f}_k )/t = \hat{L}_{h,k} |k|^r
\hat{f}_k (\hat{L}_{t,k}-1 )/\bigl(t|k|^r\bigr).
\]
The latter has a limit in $\lw^2(\ZZ)$ for $t\ra0$ since smoothness of
$f$ guarantees that $ (\hat{L}_{h,k} |k|^r\hat{f}_k )_k$
is in
$\lw^2(\ZZ)$ and $\lim_{h\ra0} (\hat{L}_{t,k}-1 )/(t|k|^r)$
exists in $\lw^\infty(\ZZ)$.
\end{Rm}

At this point let us relate the heat equation
%
\begin{equation}
\label{heat:eq}\tfrac{1}{2}\Delta u =\partial_h u
\end{equation}
on a Riemannian manifold $M$ with Laplace--Beltrami operator $\Delta$
for functions $u : M \times[0,\infty)\rightarrow\RR$ with some
initial condition $u(\cdot,0)=f$ (e.g., Sakai \cite{Sakai1996}, Section IV.3)
to scale space axiomatics. Using the Fourier transform, the heat
equation can be solved in the spectral domain such that the unique
solution of (\ref{heat:eq}) is given by the convolution of the initial
condition with the usual Gaussian $x\mapsto\phi(x/\sqrt{h})/\sqrt{h}$
(in case of $M=\RR$) and the wrapped Gaussian $z\mapsto K_h(z)$ (in
case of $M=\SS$), if the convolution exists; where in the circular case
$\Delta= D^2$.
Requiring that equality~(\ref{heat:eq}) holds only for the signs of
both sides
at proper local maxima or minima, that is, that maxima and minima are
not enhanced
under smoothing, is another axiom frequently found in the scale space
literature. This axiom has been used by Babaud
\textit{et~al.} \cite{BWBD86},
for deriving the Gaussian as the unique scale space kernel in 1D.
Later on, Lindeberg \cite{Lindeberg2011} emphasised that
nonenhancement of
local extrema distinguishes the Gaussian from other scale space kernels
such as the ones corresponding to the so-called $\alpha$ scale spaces.
However, if one goes beyond scale space representations that can be
expressed in terms of convolutions with kernels, this property can also
be fulfilled. For instance, nonenhancement of local extrema has been
established in Weickert \cite{Weickert1998}, Section~2.4.2, for a
family of nonlinear
diffusion evolutions. Similar reasonings can be used in the circular
setting as well.

\begin{Def}
A family of differentiable circular kernels $\{L_h:h>0\}$ is
\begin{longlist}[(NE)]
\item[(NE)] \emph{not enhancing local extrema} if for every smooth
$f:\SS\to\RR$ and $h>0$
\[
\sign \bigl(D^2 (L_h* f) \bigr) \sign \bigl(
\partial_h (L_h* f) \bigr)\geq0
\]
at every nondegenerate critical point (i.e., points at which $D (L_h*
f)=0\neq D^2 (L_h* f)$) of $L_h* f$ on $\SS$.
\end{longlist}
\end{Def}

Finally, we formalize (\ref{var-dimi-intro:def}). From this, we obtain
one more axiom.

\begin{Def}\label{mode-reducing:def}
Let $k\in\mathbb N_0$. A differentiable function $f\in L^1(\SS)$ is
\emph{$k$-modal} (i.e., it has $k$ modes or less) if
\[
S_{c}(Df) \leq2k.
\]
A circular kernel $L_h$, $h>0$, is
\begin{longlist}[(MR)]
\item[(MR)] \emph{mode reducing} (i.e., reducing the number of modes)
if for any function $f$ that is differentiable except for finitely many
points and that satisfies $ \lim_{t\downarrow0}Df(e^{it}z)=\lim_{t\uparrow0}Df(e^{it}z)$ for all $z\in\SS$, we have
\[
S_c (L_h*Df ) \leq S_{c}(Df),
\]
where $Df$ is continued to all of $\SS$ via its left and right limits.
\end{longlist}
\end{Def}

\begin{Th}\label{various-axioms:th}
A family of circular kernels is variation diminishing if and only if
it is mode reducing. Moreover, a differentiable semigroup of variation
diminishing circular kernels is not enhancing local extrema.
\end{Th}

Here is the complete picture.
\begin{eqnarray*}
\lleft. %
\begin{array} {rcccl} &&\mathrm{(MR)} &\Leftrightarrow& \mathrm{(VD)}
\\
&&&&\mbox{and}
\\
&&\mathrm{(SL)}, \mathrm{(SG)}&\quad\Rightarrow\quad&\mbox{differentiable } \mathrm{(SG)} \end{array} %
 \rright\}&\Rightarrow& \textup{(NE)}.
\end{eqnarray*}

On the line, an analog definition of mode reducing is more simple, as
no periodicity of the anti-derivative is required. For functions
$f,g\in L^1(\RR)$, denote by ($f*_\RR g)(y) = \int_{-\infty}^\infty
f(x) g(y-x) \,dx$ the usual convolution on the line, if existent and by
$S^-(f)$ the number of sign changes of $f$. (cf. Karlin \cite
{Karlin1968}, Brown \textit{et al.} \cite
{BrownJohnstoneMacGibbon1981} and Appendix~\hyperref[appa]{A}.) Then call a
kernel $K$ on the line \emph{mode reducing} if for any differentiable
function $f:\RR\ra\RR$ we have $S^{-} ( (K*_\RR f
)'
)\leq S^{-}(f')$.

\begin{Rm} Inspection of the proofs of Lemmata \ref{MR_equiv_VR:lem}
and \ref{CV-implies-NE:lem} in the \hyperref[app]{Appendix} shows that on the line
$\RR
$, with the analog axioms, Theorem~\ref{various-axioms:th} also holds true.
\end{Rm}

\subsection{The uniqueness of the wrapped Gaussian kernel}
The analogs of the following two characterizations of the wrapped
Gaussian kernel as generating regular scale space surfaces and being up
to scaling the only variation diminishing, symmetric and strongly
Lipschitz semigroup are well known on the line and, under suitable
adaptations on Euclidean spaces. For an overview over various scale
space axiomatics for Euclidean spaces, see Weickert \textit{et al.}
\cite
{WeickertIshikawaAtuschi1999}. Suitably adapted arguments, sometimes
even simpler lead to the following two circular versions, which, to the
knowledge of the authors, have not been established before.

\begin{Th}\label{wrappedGauss-is-causal:thmm} The family of wrapped
Gaussian kernels $\{K_h: h>0\}$ generates a regular scale space tube.
\end{Th}

\begin{Th}\label{causal-is-wrappedGauss:thmm} Let $\{ L_{h}:h>0\}$ be a
semigroup generating a regular scale space tube.
Then there is a constant $\alpha>0$ such that for all $h>0$ the
function $L_{\alpha h}$ is the wrapped Gaussian kernel $K_h$ with bandwidth
$h$.
\end{Th}

\begin{Rm}\label{strong-continuity:rm}
Often in scale space literature, the axiom of \emph{strong continuity}
\begin{longlist}[(SC)]
\item[(SC)] $\|(L_h - {\rm I})f\|_{L^2} \to0$ as $h\downarrow0$ for
all $f\in L^2(\SS)$,
\end{longlist}
is introduced which is weaker than the strong Lipschitz property (SL).
Here, I denotes the identity. It seems that this property is not
sufficient to ensure the critical fact (detailed in the \hyperref[app]{Appendix}) that
the smooth functions are in the domain of definition of the
infinitesimal operator
\[
\mathcal{A} = \lim_{h\downarrow0} \frac{L_h-{\rm I}}{h} ,
\]
which is required if $\mathcal{A}$ is to turn out to be a multiple of
the Laplacian.

More precisely, we first show that $ \mathcal{A}$ is a local operator,
that is, if a smooth function vanishes in an neighborhood of a point,
then its image under $\mathcal{A}$ also vanishes at this point. This
implies that $\mathcal{A}$ is a differential operator. In a second
step, we argue that, due to causality, $\mathcal{A}$ must be a multiple
of the Laplacian.

This strain of arguments has been employed by Lindeberg \cite{Lindeberg2011},
page~41, for the Euclidean case, who has suggested to require
a property different from the strong Lipschitz property (SL) which we
require for the circle.
\end{Rm}

\section{Inference on shape parameters}\label{inference:scn}

In the previous Section~\ref{scale-space:scn}, we showed that the
wrapped Gaussian generates a regular scale space tube and that it is
only the wrapped Gaussian that has this property. For this reason, in
the following, we only consider the wrapped Gaussian kernel $K_h$
defined in (\ref{wrapGauss:eq}) although empirical scale space tubes
(Definition~\ref{scale_space_tube:def}) can of course be defined for
arbitrary families, Remark~\ref{Rem:Info_small_bandwidths} holds for
any semigroup and the proofs of Theorem~\ref{thmm:Weak_Conv} and
Corollary~\ref{weak-conv:cor} require only that second moments be
finite. In the literature, the properties variation diminishing,
nonenhancement of modes, etc. are often referred to as preserving
\emph
{causality}. Thus, the wrapped Gaussians are the one and only (under
reasonable assumptions) kernels preserving circular causality.

\subsection{Circular causality}\label{causality:scn}

\begin{assumption}
\label{Assumption}From now on assume that $X$ is a random variable
(with or without a density with respect to the Haar measure
$\mu$) taking values on the circle $\SS$ and we observe $X_{1},\ldots
,X_{n}\stackrel{i.i.d.}{\sim}X$.
\end{assumption}

\begin{defn}\label{scale_space_tube:def}
We call the two-parameter stochastic process indexed in $\SS\times\RR^+$
\[
\Biggl\{ f_{h}^{(n)}(z):=\frac{1}{n}\sum
_{j=1}^{n}K_{h} \bigl(zX_{j}^{-1}
\bigr):z\in\SS,h>0 \Biggr\}
\]
the \emph{empirical circular scale space tube} and
\[
\bigl\{ f_{h}(z):=Ef_{h}^{(n)}(z):z\in\SS,h>0
\bigr\}
\]
the \emph{population circular scale space tube}.
\end{defn}

\begin{rem}\label{expectation:rm}
Note that if $X$ has density $f$ with respect to the Haar measure $\mu$
on $\SS$ then under Assumption~\ref{Assumption}
\[
E\bigl(f_{h}^{(n)}(z)\bigr)  = E \bigl(K_{h}
\bigl(zX^{-1} \bigr) \bigr) =\int_{\SS}K_{h}
\bigl(zw^{-1}\bigr)f(w)\mu(dw) = (K_{h}*f ) (z).
\]
\end{rem}

Now we can state the causality theorem.
%
\begin{thmm}[(The circular causality theorem)]\label{thmm:circ-causality}
Let $K_h$ be the wrapped Gaussian. Under Assumption~\ref{Assumption},
for any $m=0,1,\ldots$ the following holds:
\begin{longlist}[(ii)]
\item[(i)] The mappings $h\mapsto S_{c} (D^{m}f_{h} )$ and
$h\mapsto S_{c} (D^{m}f_{h}^{(n)} )$
are decreasing and right continuous functions on $(0,\infty)$.
\item[(ii)] If $X$ has a density $f$ with respect to $\mu$ which is $m$-times
differentiable and we set $f_{0}=f$ then $h\mapsto S_{c}
(D^{m}f_{h} )$
is decreasing and right continuous on $[0,\infty)$.
\end{longlist}
\end{thmm}
%
\subsection{Weak convergence of the scale space tube}\label
{weak-convergence:scn}

\begin{rem} \label{Rem:Info_small_bandwidths}All information of the empirical
and the population scale space tube for any bandwidth is already
contained in that of every smaller bandwidth. More precisely, for
$h_0>0$, $\{f_{h}^{(n)}(z):z\in\SS,h\geq h_{0}\}$
and $\{f_{h}(z):z\in\SS,h\geq h_{0}\}$ can be reconstructed from
$\{f_{h_{0}}^{(n)}(z):z\in\SS\}$ or $\{f_{h_{0}}(z):z\in\SS\}$,
respectively.
\end{rem}

The reason for this is of course the identity $f_{h_{0}+h}=K_{h}*f_{h_{0}}$
for any $h>0$ (and similarly for the empirical counterpart), which also
holds for $h=0$ if we set $K_0*f := f$.

Moreover, $(z,h)\mapsto K_{h}(z)$ is a solution to the heat equation
(\ref{heat:eq}), and hence $f_{h}^{(n)}(z)$ and $f_{h}(z)$ are
solutions as well.
By the well known maximum principle for solutions of the heat equation,
maxima are attained at the boundary, that is, we have
%
\begin{equation}
\label{max-principle-PDE:eq} \sup_{z\in\SS,h\geq h_{0}}
\bigl|f_{h}^{(n)}(z)-f_{h}(z)\bigr|=
\sup_{z\in
\SS
}\bigl|f_{h_{0}}^{(n)}(z)-f_{h_{0}}(z)\bigr|\qquad
\mbox{for all }h_0 > 0.
\end{equation}

\begin{thmm}
\label{thmm:Weak_Conv}Under Assumption~\ref{Assumption}, let $m\geq0$
be an integer and define
%
\begin{equation}
\cv(z_{1},z_{2};h)= \operatorname{cov}
\bigl(D^{m}K_{h}\bigl(z_{1}X^{-1}
\bigr),D^{m}K_{h}\bigl(z_{2}X^{-1}
\bigr)\bigr)\label{eq:cov_def}
\end{equation}
for $z_{1},z_{2}\in\SS$ and $h>0$. Then, for any fixed $h>0$,
%
\begin{equation}
n^{{1}/{2}} \bigl(D^{m}f_{h}^{(n)}(z)-D^{m}f_{h}(z)
\bigr)\ra G_{h},\label{eq:G_h0}
\end{equation}
weakly in $C(\SS)$, where $G_{h}$ is a Gaussian process on $\SS$ with
mean zero and covariance
structure defined by $\cv(z_{1},z_{2};h)$.

Moreover, $G_{h}$ has continuous sample paths with probability
one. In particular, $P(\sup_{z\in\SS}|G_{h}(z)|<\infty)=1$.
\end{thmm}

This gives at once the following.

\begin{cor}\label{weak-conv:cor}
Under Assumption~\ref{Assumption} with $h_0> 0$,
\[
\sup_{z\in\SS,h\geq h_0}n^{
{1}/{2}}\bigl|D^{m}f_{h}^{(n)}(z)-D^{m}f_{h}(z)\bigr|
\]
converges weakly to $\sup_{z\in\SS}|G_{h_0}(z)|$ as $n\ra\infty$,
where $G_{h_0}$ is a Gaussian process with zero mean and covariance
structure $\cv(z_{1},z_{2},h_0)$ given by (\ref{eq:cov_def}).
\end{cor}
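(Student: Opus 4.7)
The corollary follows by a standard two-step reduction: (i) shrink the two-parameter supremum over $\SS\times[h_0,\infty)$ to a one-parameter supremum over $\SS$ at the single scale $h_0$, and (ii) apply weak convergence of the process at scale $h_0$ together with the continuous mapping theorem. The reason this works cleanly is that the maximum principle for the heat equation, already invoked for $m=0$ in (\ref{max-principle-PDE:eq}), extends verbatim to $D^m f_h^{(n)}-D^m f_h$.

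The plan is as follows. First, I would observe that since $(z,h)\mapsto K_h(z)$ solves the circular heat equation $\partial_h u=\tfrac{1}{2}D^2 u$, and $D$ commutes with convolution by $K_h$, the function
\[
u_h^{(n)}(z):=D^m f_h^{(n)}(z)-D^m f_h(z)=\frac{1}{n}\sum_{j=1}^{n}D^m K_h(zX_j^{-1})-E\bigl[D^m K_h(zX^{-1})\bigr]
\]
also solves the heat equation on $\SS\times(0,\infty)$ for every realization of $X_1,\dots,X_n$. The standard maximum principle for the heat equation on the compact manifold $\SS$ then yields that $h\mapsto \sup_{z\in\SS}|u_h^{(n)}(z)|$ is non-increasing. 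In particular,
\[
\sup_{z\in\SS,\,h\geq h_0}|D^m f_h^{(n)}(z)-D^m f_h(z)|=\sup_{z\in\SS}|D^m f_{h_0}^{(n)}(z)-D^m f_{h_0}(z)|\,,
\]
which is the $D^m$-analog of (\ref{max-principle-PDE:eq}). This step is pathwise and deterministic, so it multiplies cleanly with the factor $n^{1/2}$.

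Second, at the fixed scale $h_0$, Theorem \ref{thm:Weak_Conv} yields weak convergence
\[
n^{1/2}\bigl(D^m f_{h_0}^{(n)}-D^m f_{h_0}\bigr)\Rightarrow G_{h_0}\qquad\text{in }C(\SS,\|\cdot\|_\infty),
\]
where $G_{h_0}$ has almost surely continuous sample paths. Since the map $\Phi:C(\SS)\to[0,\infty)$, $\Phi(g)=\sup_{z\in\SS}|g(z)|$, is continuous with respect to the sup-norm, the continuous mapping theorem combined with the previous display gives
\[
\sup_{z\in\SS,\,h\geq h_0}n^{1/2}|D^m f_h^{(n)}(z)-D^m f_h(z)|
\;=\;\Phi\!\left(n^{1/2}(D^m f_{h_0}^{(n)}-D^m f_{h_0})\right)
\;\Rightarrow\;\sup_{z\in\SS}|G_{h_0}(z)|\,,
\]
as desired.

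The only nontrivial ingredient is the maximum-principle step, and the main obstacle is justifying it cleanly for the signed quantity $u_h^{(n)}$ at every point of the compact parameter set and confirming that $D^m$ commutes with $K_h*$ in the required pointwise sense (which is immediate because $K_h$ is smooth and the $X_j$ contribute only finitely many translates). Once this reduction is in place, everything else is off-the-shelf: Theorem \ref{thm:Weak_Conv} supplies the functional CLT at the fixed scale $h_0$, almost-sure continuity of $G_{h_0}$ ensures $\Phi$ is $P$-a.s.\ continuous at the limit, and the continuous mapping theorem closes the argument.
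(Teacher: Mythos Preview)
Your proposal is correct and matches the paper's approach. The paper derives the corollary ``at once'' from Theorem~\ref{thm:Weak_Conv} together with the maximum-principle identity~(\ref{max-principle-PDE:eq}) and the continuous mapping theorem; your only addition is to make explicit that~(\ref{max-principle-PDE:eq}) extends from $m=0$ to general $m$ because $D^m f_h^{(n)}-D^m f_h$ again solves the heat equation, which the paper leaves implicit.
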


In the following denote by $q_{(1-\alpha)}$ the $(1-\alpha)$ quantile
of the random variable $\sup_{z\in\SS}|G_{h_0}(z)|$, for $0\leq
\alpha
\leq1$.

The convergence results can be used to test whether the derivative
$D^{m}f_{h}(z)$ at some point $z\in\SS$ and for some bandwidth $h\geq h_{0}>0$
is different from zero. We perform the tests for the hypotheses
\[
H_{0}^{(h,z)}:D^{m}f_{h}(z)=0,\qquad  z\in\SS,h
\geq h_{0}
\]
as follows:
%
\begin{equation}
\mbox{If } %
\cases{ D^{m}f_{h}^{(n)}(z)>n^{-{1}/{2}}q_{(1-\alpha)},
&\quad $\mbox{reject }H_{0}^{(h,z)}\mbox{ and conclude
}D^{m}f_{h}^{(n)}(z)>0$,\vspace *{2pt}
\cr
\bigl|D^{m}f_{h}^{(n)}(z)\bigr|\leq n^{-{1}/{2}}q_{(1-\alpha)},
&\quad $\mbox {accept }H_{0}^{(h,z)}$,\vspace*{2pt}
\cr
D^{m}f_{h}^{(n)}(z)<-n^{-{1}/{2}}q_{(1-\alpha)},
&\quad $\mbox{reject }H_{0}^{(h,z)}\mbox{ and conclude
}D^{m}f_{h}^{(n)}(z)<0$.} %
\label{eq:Test}
\end{equation}
The following theorem states two key properties of this test:

Firstly, by using the quantile $q_{(1-\alpha)}$ of the supremum $\sup_{z\in\SS}|G_{h_{0}}|$
we control the \emph{family-wise error rate} (FWER) of the tests.
More precisely, we can assert that the asymptotic probability that
one or more of the hypotheses $\{H_{0}^{(h,z)}:z\in\SS,h\geq h_{0}\}$
is falsely rejected is at most $\alpha$.

Secondly, we see that for a given bandwidth $h$ all sign changes
of $D^{m}f_{h}$ are detected by this test asymptotically with probability
one. Note that this does not require the test to detect all points
of positive/negative derivative. Indeed, without prior information on
the smallest scales this is not possible for any test since the
absolute value of the derivative can be arbitrarily
small.

\begin{thmm}\label{thmm:TestThm}
If under Assumption~\ref{Assumption} with $h_0>0$, $0\leq\alpha\leq
1$, the test (\ref{eq:Test})
is performed for each of the hypotheses $\{H_{0}^{(h,z)}:z\in\SS
,h\geq
h_{0}\}$
then the probability that one or more of them are falsely rejected
is asymptotically at most $\alpha$.

Moreover, if for any fixed bandwidth $h\geq h_{0}$ the function $D^{m}f_{h}(x)$
has $2k\geq1$ sign changes then this test will detect them with
asymptotic probability one.
\end{thmm}

\subsection{Inferred persistence}\label{persistence:scn}
A scale space satisfying causality gives a notion of \textit
{persistence} of features of a (density-) function $f$ which are given
by zero-crossings of derivatives. The number of such features is a
decreasing integer valued function in the bandwidth and, therefore, is
constant except for finitely many jumps. We can consider the bandwidths
associated with these jumps as the amount of smoothing that is
necessary to remove the features, one by one. Indeed, with the wrapped
Gaussian $K_h$ as kernel, we can define a sequence decreasing in $k$,
\[
\ssp^{(m)}_k := \ssp^{(m)}_k(f) = \inf
_{h>0} \bigl\{h:S_c\bigl(K_h *
D^mf\bigr) < 2k\bigr\} ,\label{eq:ssp}
\]
and call $\ssp^{(m)}_k$ the \textit{scale space persistence} of $D^{m-1}f$.

We will now use the family of tests introduced in (\ref{eq:Test}) to
define an empirical counterpart of $\ssp^{(m)}$, which can be obtained
from the data $X_1,\ldots,X_n$.
To this end, let
%
\begin{eqnarray}\label{eq:signatureFcn}
W^{(m)}_{h}(z)= %
\cases{ +1, & \quad$\mbox{if
}D^mf_{h}^{(n)}(z)>n^{-{1}/{2}}q_{(1-\alpha
)}$,
\vspace*{2pt}
\cr
0, &\quad $\mbox{if }\bigl|D^mf_{h}^{(n)}(z)\bigr|
\leq n^{-
{1}/{2}}q_{(1-\alpha
)}$,\vspace*{2pt}
\cr
-1, &\quad $\mbox{if
}D^mf_{h}^{(n)}(z)<n^{-{1}/{2}}q_{(1-\alpha)}$
}
\nonumber
\\[-6pt]
\\[-8pt]
\eqntext{\mbox{ for }h\geq h_0\mbox{ and } W_h =
W_{h_0}\mbox{ for }h<h_0,\qquad}
\end{eqnarray}
which we call the \emph{WiZer signature function},
and define
%
\begin{equation}
\widehat{\ssp}^{(m)}_k := \widehat{\ssp}^{(m)}_k(f)
= \inf_{h>0} \bigl\{ h:S_c\bigl(W^{(m)}_h
\bigr) < 2k\bigr\}\label{eq:emp_ssp}
\end{equation}
as the \emph{significant empirical scale space persistence} of
$D^mf_{h}$ or just \emph{inferred persistence}. Note that we are \emph
{not} simply defining the empirical scale space persistence as the
persistence of the kernel density estimator $\inf_{h>0} \{h:S_c(D^m
f^{(n)}_h) < 2k\}$. The reason is, that we want to eliminate
statistically insignificant features.

The following is an immediate consequence of Theorem~\ref{thmm:TestThm}.

\begin{cor}
Under the assumptions of Theorem~\ref{thmm:TestThm}, the following holds
for the scale space persistence of $D^{m-1}f$
\[
\liminf_{n\ra\infty}P \bigl(\widehat{\ssp}_k^{(m)}
\leq\ssp _k^{(m)}\mbox{ for all }k\in\ZZ \bigr)\geq1-\alpha.
\]
\end{cor}

\begin{rem}
Note that by definition the sequence $\ssp^{(m)}_1,\ssp^{(m)}_2,\ssp
^{(m)}_3,\ldots$
is decreasing. For $k\geq1$, $\ssp^{(m)}_k$ can be considered as the
\emph{birth} bandwidth of the $k$th mode of $D^{m-1} f_h$ and for
$k\geq2$ it is also the \emph{split} bandwidth where $k-1$ modes of
$D^{m-1} f_h$ \emph{split} into $k$ modes. As birth and splits (or
births) occur with decreasing bandwidths, our scenario is twofold
opposite to that of usual persistence diagrams (e.g., Edelsbrunner \textit{et al.} \cite
{EdelsbrunnerLetscherZomorodian2002}, Cohen-Steiner \textit{et al.} \cite{CohenEdelsbrunnerHarer2007},
Chung \textit{et al.} \cite{ChungBubenikKim2009}) where births and
mergers (or deaths) occur with
increasing bandwidth.
\end{rem}

The application in the following Section~\ref{application:scn}
illustrates the case $m=1$ of scale space persistence of modes of $f_h$.

%
\includegraphics{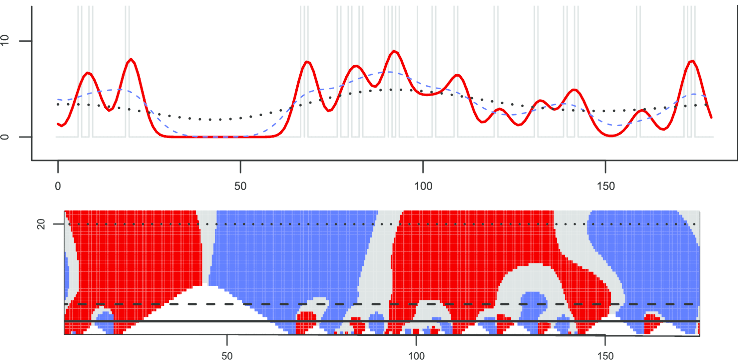}


\section{Application to early stem cell differentiation}\label
{application:scn}

\textit{Datasets -- acto-myosin cytoskeleton.}
We re-analyze a data set from Zemel \textit{et~al.} \cite{Zemel2010a}
hMSCs (human mesenchymal stem cells) cultured for 24~hrs. on elastic
substrates of different Young's moduli $E$ (1, 11 and 34~kPa),
subsequently chemically fixed and immuno-stained for NMM IIa, the motor
proteins making up small filaments that are actually responsible for
cytoskeletal tension.
Fluorescence images were recorded
for 30 cells on each of the three conditions:
\begin{longlist}[1.]
\item[1.]$E=1$~kPa which form the data set \emph{Black},
\item[2.]$E=11$~kPa which form the data set \emph{Red} and
\item[3.]$E= 34$~kPa which form the data set \emph{Blue}.
\end{longlist}

\begin{figure}

\includegraphics{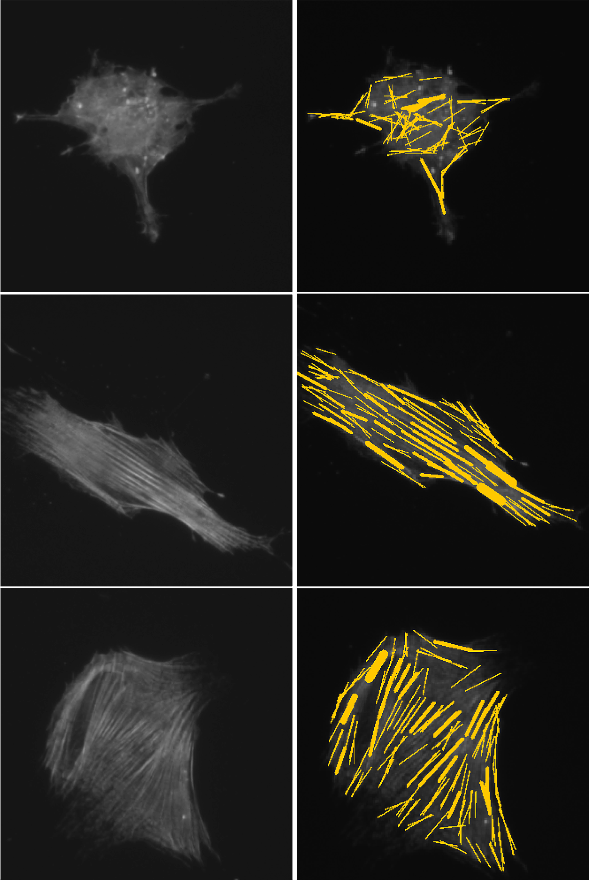}

\caption{Fluorescence microscopy images of human mesenchymal stem
cells, showing immuno-labeled myosin
structures, cultured for 24 hrs. on matrices of varying young moduli
$E_m = 1$~kPa (top row), $E_m = 11$~kPa (middle row) and $E_m = 34$~kPa
(bottom row). Left column: raw fluorescence images, right column:
superimposed traced filament structure.
(Colors are visible in the online version of
the article.)}\label{raw-traced:fig}
\end{figure}

From these micrographs, the filament structure is reconstructed using
the filament sensor of Eltzner
\textit{et~al.} \cite{EltznerWollnikGottschlichHuckemannRehfeldt2014}.
Figure~\ref{raw-traced:fig} shows typical cell images with their
traced skeleton from each of the three datasets.

\textit{The statistical model.} Each traced image gives a realization
$(z_j,w_j)_{j=1}^J$ of a bounded filament process where $J\in\mathbb
N$ denotes the number of filaments, $z_j\in\mathbb C$ the $j$th
filament's center in complex notation, $w_j = \lambda_je^{i\phi_j}$
encodes its length $\lambda_j>0$ and orientation angle $\phi_j \in
[-\pi
/2,\pi/2)$ with the positive real axis ($j=1,\ldots,J$). For a given
binning number $N\in\mathbb N$ we have the empirical histogram
%
\begin{equation}
\label{histogram:def} \Biggl(\sum_{j: \phi_j \in
[-{\pi}/{2} + {\pi k}/{N},-{\pi}/{2}+{\pi
(k+1)}/{N} )}
\lambda_j \bigg/ \sum_{j=1}^J
\lambda_j \Biggr)_{k=0}^{N-1}.
\end{equation}
We assume that there is a true underlying filament process such that
\[
\mathbb P\{z\in A,\lambda\in B,\phi\in C\} = \int_{A\times B \times C} g(z,
\lambda,\phi) \,dz \,d\lambda \,d\phi
\]
for Borel sets $A\subset\mathbb C, B\subset(0,\infty)$ and $C\subset
[-\pi/2,\pi/2)$ with a density $g$ w.r.t. the corresponding Lebesgue
measures. Then the histogram (\ref{histogram:def}) is an estimator for
the true conditional density
%
\begin{equation}
\label{marginal:eq}f(\phi) =\frac{\mathbb
E[\lambda
|\phi]}{\mathbb E[\lambda]},\qquad \phi\in[-\pi/2,\pi/2) .
\end{equation}

This statistical model relates to the previous theoretical analysis as
follows. For every observation of the filament process, there are $n$
pixels carrying orientations and we denote by $X_l \in[-\pi/2,\pi/2)$
the orientation of the $l$th pixel, $1\leq l\leq n$, where a pixel is
multiple counted, each with the orientation of the corresponding
filament, if two or more filaments intersect at this pixel. The binned
histogram of $X_1,\ldots,X_n$ is given by (\ref{histogram:def}).
Similarly, the filament process is a point process carrying weighted
(by relative filament length) orientation marks, where the distribution
of weighted orientations $X\in[-\pi/2,\pi/2)$ has the density $\phi
\mapsto f(\phi)$ given by (\ref{marginal:eq}). Taking into account
pixel and machine discretization, $X_1,\ldots,X_n$ can be viewed as a
discretized sample of the weighted orientation mark $X$. If we double
the orientations, we are in the situation of Assumption~\ref{Assumption}.

%

\textit{Empirical histograms and WiZer functions.}
From the empirical histograms with total number $n$ of observed pixels
(filament crossings result in pixels occurring in more than one
filament; such pixels are accordingly multiply counted), as well as
from the true density $f$ we obtain densities $f_h^{(n)}$ and $f_h$,
respectively, smoothed with the wrapped Gaussian kernel (\ref
{wrapGauss:eq}) of bandwidth $h>0$. For the former, the WiZer signature
functions $W_h$ are computed as in (\ref{eq:signatureFcn}). For the
hMSCs from Figures~\ref{raw-traced:fig}, \ref
{histogram-sizer.fig} depicts empirical histograms, corresponding
$f_h^{(n)}$ for some values of $h$ and the WiZer signature functions.

Note that not all of the modes of the smoothed empirical densities may
be statistically significant at the specified level, cf. Figure~\ref{Detail-histogram-sizer.fig}.

\begin{figure}

\includegraphics{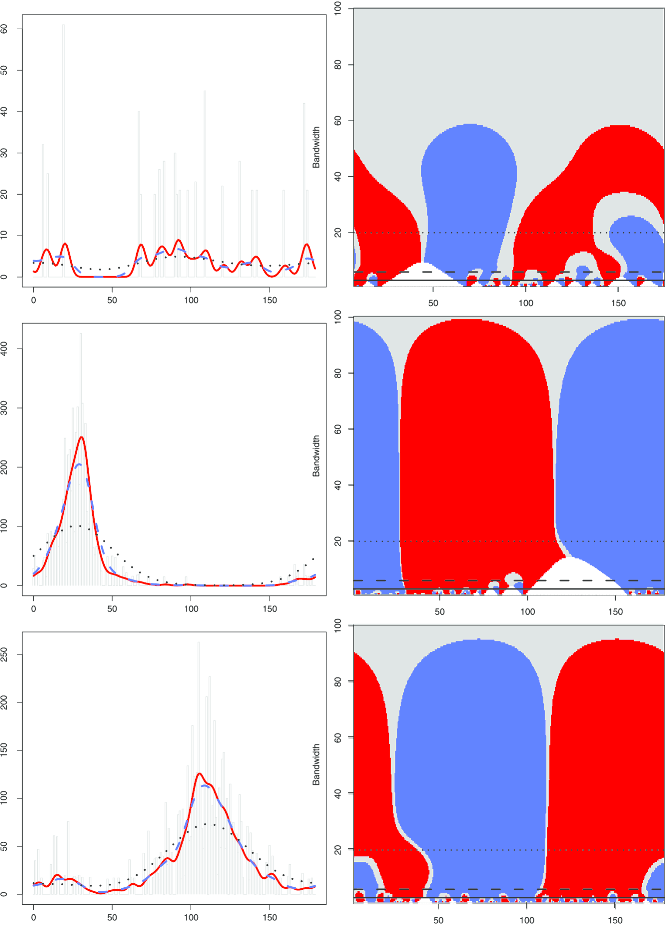}
\vspace*{-5pt}
\caption{Processed traced filament structures from Figure~\protect\ref{raw-traced:fig} (rows correspond). Left column: histograms of total
filament length $\times$ width (vertical) per angle orientation
(horizontal) with smoothed densities via wrapped Gaussian kernels of
bandwidths 3 (solid red), 6 (dashed blue) and 20 (dotted black). Right
column: the respective WiZer signature functions to the significance
level $\alpha= 0.05$ (horizontal: angle orientation, vertical:
bandwidth). Here, blue depicts significant increase $(W^{(1)}_h(z)=1)$,
red significant decrease\vspace*{-2pt} $(W^{(1)}_h(z)=-1)$, grey regions where
neither increase nor decrease is significant $(W^{(1)}_h(z)=0)$ and
white areas with too few data (effective sample size $\sum_{i = 1}^n
K_h (x - x_i) / K_h(0) \leq5$, Chaudhuri and Marron \cite{ChaudhuriMarron1999},
page~812).\vspace*{-1pt}
The three bandwidths from the left column are horizontal lines in the right.
(Colors are visible in the online version of
the article.)}
\label{histogram-sizer.fig}
\end{figure}

\textit{Inferred mode-persistence diagrams.} In the next step, the
inferred $k$th persistence bandwidths $\widehat{\ssp}^{(1)}_k$ for
modes of $f_h^{(n)}$ have been computed from the WiZer signature
functions. Recall that $\widehat{\ssp}^{(1)}_k$ gives the inferred
bandwidth of birth of the $k$th mode which is for $k>1$ the split
bandwidth of the $(k-1)$st mode. For the three hMSCs of the previous
Figures~\ref{histogram-sizer.fig} and \ref{Detail-histogram-sizer.fig},
Table~\ref{persistence:tab} depicts the first few inferred mode birth
bandwidths in degrees.

\begin{figure}

\includegraphics{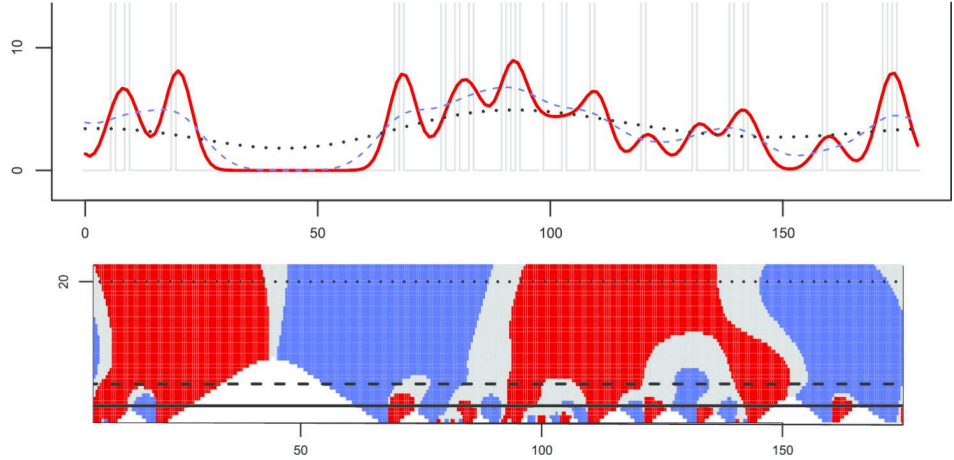}

\caption{Detail of the first row of Figure \protect\ref
{histogram-sizer.fig}. Careful investigation shows that only one of the
two first modes of the kernel smoothed empirical density with bandwidth
3 (red, solid) is statistically significant. Similarly, the first mode
of the kernel smoothed empirical density with bandwidth 6 (blue,
dashed) is statistically insignificant.
(Colors are visible in the online version of
the article.)}\label{Detail-histogram-sizer.fig}
\end{figure}

Finally, we depict inferred mode persistence bandwidths graphically.
Typically, in logarithmic scale, inferred persistences of several modes
can be included in a single diagram such that the verticals measure
inferred persistence of odd order modes while the horizontals measure
inferred persistence of even order modes, cf. Figure~\ref{persistence:fig}. With confidence level $1-\alpha= 0.95$, the black
cell has least persistent first modes and most persistent second and
higher modes. This is reversed for the red cell: most persistent first
mode and least persistent second and third mode. The blue cell's mode
persistences are intermediate. For all cells, the fourth mode comes to
lie on the first diagonal which accounts for the fact that fourth modes
have not been detected for any cell at the given resolution level (200
steps between the bandwidth reported in the WiZer signature function),
cf. Figure~\ref{persistence:tab}.

\textit{Data analysis.} We have now applied the above analysis to
the data set of a total of 179 cells: 60, 58 and 61 on each elasticity
(1~kPa, black, 11~kPa, red, 34~kPa, blue). Each of the three datasets
comes in two files reflecting two experimental batches. For each of
these six files, common parameters (reflecting average intensity, blur
and noise of the specific experimental setup) for the filament sensor
(cf. Eltzner
\textit{et~al.} \cite{EltznerWollnikGottschlichHuckemannRehfeldt2014}) have been
determined by an expert. Figure~\ref{persistence-sample-boxplots:fig}
depicts the boxplots for the first 9 inferred mode persistences and
Figure~\ref{persistence-sample-means:fig} the persistence diagram for
the means of the first seven, all in logarithms of radians. Recalling
that in the persistence diagram, odd order modes are more persistent if
they have a higher vertical component, even order are so if their
horizontal component is higher, the trend observed in Figure~\ref{persistence:fig} is
consolidated. In mean and median,

\begin{table}[b]
\caption{Inferred bandwidths $\widehat{\ssp}^{(1)}_k$ for
$k=1,\ldots,6$ and $k=16$ in degrees of births of modes with confidence
$1-\alpha= 0.95$ for the three cells from Figure~\protect\ref{raw-traced:fig}.
The first four are depicted in log-scale in
Figure~\protect\ref{persistence:fig}}
\label{persistence:tab}
\begin{tabular*}{\textwidth}{@{\extracolsep{\fill}}llllllllll@{}}
\hline
\mbox{Mode no.}&1&2&3&4&5&6&$\cdots$ &16&
$\cdots$
\\
\hline
\mbox{Cells on 1~kPa (black)}&58.420 &26.245 &8.425 &4.960 &4.960 &4.465&
&0.573
\\
\mbox{Cells on 11~kPa (red)}& 99.010 &\phantom{0}5.455 & 3.970 & 2.485 & 2.485 &
2.485& &1.495
\\
\mbox{Cells on 34~kPa (blue)} &95.050 &15.355 &4.960 &3.970 &3.970 &3.475&
&1.990 \\
\hline
\end{tabular*}
\end{table}

\begin{longlist}[1.]
\item[1.] cells on 1~kPa (black) have least persistent first modes and most
persistent next higher modes,
\item[2.] cells on 11~kPa (red) have most persistent first modes and least
persistent next higher modes,
\item[3.] cells on 34~kPa (blue) range close to red cells with a clear
tendency (except for the first mode) toward black cells.
\end{longlist}

\section{Discussion} In this research, we have proposed a
nonparametric methodology that can be applied to circular data to
detect differences in shape features of underlying unknown circular
densities. This method builds on a wrapped SiZer (WiZer) because we
found that the wrapped Gaussian kernel is the one and only kernel
(under reasonable assumptions) that preserves circular causality. In a
natural way, with the W/SiZer's methodology comes a notion of shape
feature persistences. We have used this to propose a measure assessing
early differentiation of human mesenchymal stem cells (hMSCs). Our
results warrant larger studies involving larger sample sizes. In such
studies, we expect that inferred mode persistence diagrams (IMPDs) and
combinations of IMPDs with other cell parameters will give a useful
tool to measure early differentiation of hMSCs also over time.

\begin{figure}

\includegraphics{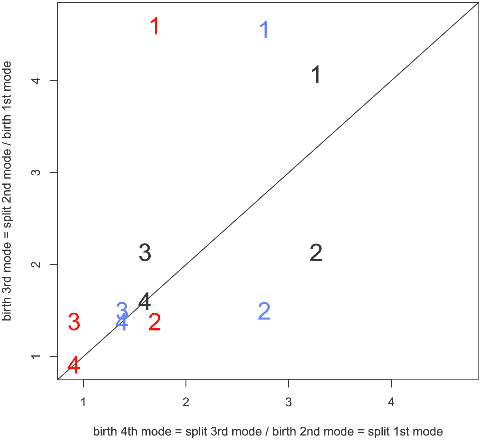}

\caption{Inferred mode-persistence diagram for the first four
modes (cf. Table \protect\ref{persistence:tab}) of the three cells of Figure
\protect\ref{raw-traced:fig} in log-scale. Colors indicate cell
number (cell 1:
black, cell 2: red, cell 3: blue) and coordinates of numbers indicate
(birth, split) for even number of modes (1 and 3) and (split, birth) for
odd number of modes (2 and 4). Odd order modes are more persistent if
they have a higher vertical component, even order are so if their
horizontal component is higher.
(Colors are visible in the online version of
the article.)}\label{persistence:fig}
\end{figure}

\begin{figure}

\includegraphics{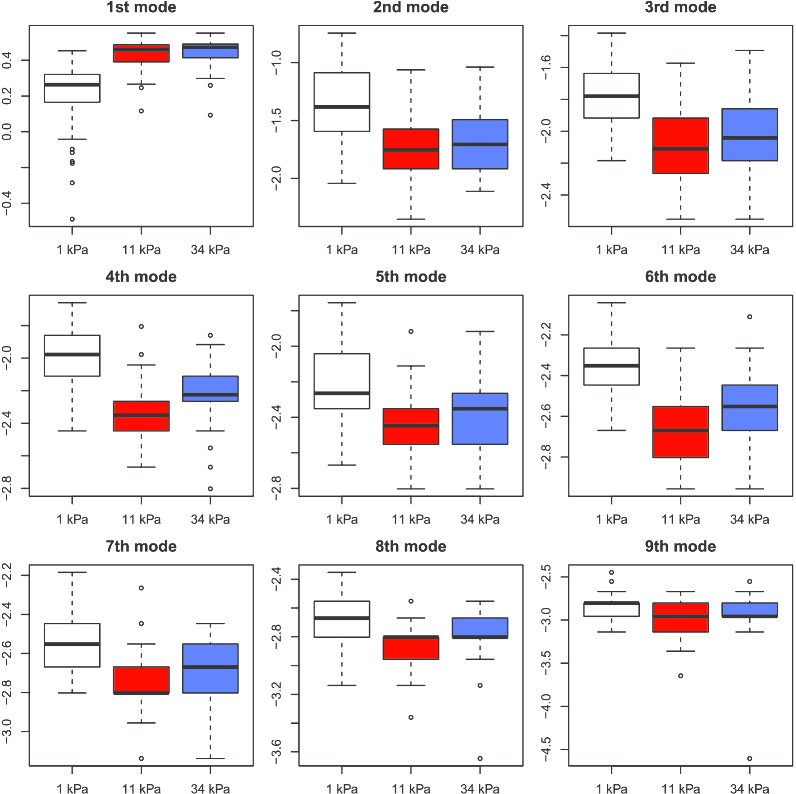}

\caption{Boxplots of inferred mode-persistences in log-scale
radians for the first nine modes for a sample of 179 cells in total
exposed for 24 hours to matrix rigidities of 1~kPa (black, 60 cells),
11~kPa (red, 58 cells) and 34~kPa (blue, 61 cells).
(Colors are visible in the online version of
the article.)}\label
{persistence-sample-boxplots:fig}
\end{figure}

\begin{figure}

\includegraphics{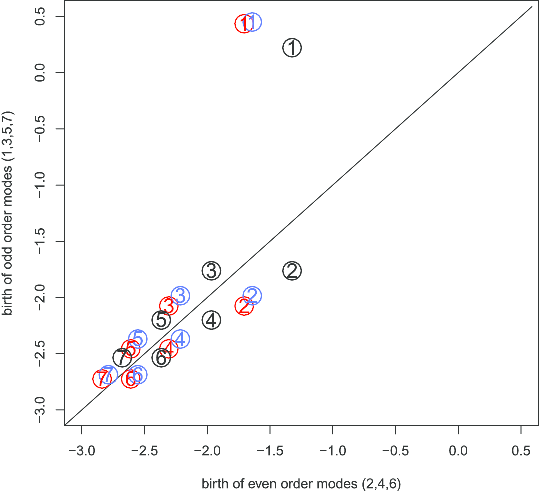}

\caption{Inferred mode-persistence diagram in log-scale radians
for the means of the first seven modes of the data set from Figure
\protect\ref
{persistence-sample-boxplots:fig} with same coloring scheme.
(Colors are visible in the online version of
the article.)}\label
{persistence-sample-means:fig}\vspace*{-8pt}
\end{figure}

The key benefit of the W/SiZer methodology, providing for IMPDs, is
also a key limitation: statistical inference is possible only above a
specific minimal bandwidth $h_0$, which is of course free to choose,
although it may affect the asymptotic approximation, cf. Theorem~\ref
{thmm:Weak_Conv} and Corollary~\ref{weak-conv:cor}. This is due to the
fact that the corresponding asymptotic distributions are not tight as
$h_0\to0$ and a suitable calibration as $h_0 \to0$ and $n\to\infty$
would have to be found that allows to compensate for nontightness (cf.
Chaudhuri and Marron \cite{ChaudhuriMarron2000}, page~420). For the line,
this problem has been
recently solved by Schmidt-Hieber \textit{et al.} \cite{SHMD13}, now allowing for inference on shape
parameters simultaneously over all scales (see also the discussion in
Panaretos \textit{et al.} \cite{Panaretos2013} for a similar reasoning
in the context of flow
estimation). We expect that these arguments can be extended to our setup.

\begin{appendix}\label{app}
\section*{Appendix A: Sign changes}\label{appa}

Let $S^{-}$ denote the usual sign change functional as defined in,
for example, Brown \textit{et al.} \cite{BrownJohnstoneMacGibbon1981},
Definition~2.1, Karlin \cite
{Karlin1968},
that is, for real numbers $x_{1},\ldots,x_{n}$,
the expression
\[
S^{-}(x_{1},\ldots,x_{n})
\]
simply counts the number of changes from positive to negative values
in the sequence $x_{1},\ldots,x_{n}$, or vice versa, ignoring zeros. We
now follow Mairhuber \textit{et al.} \cite{MairhuberSchoenbergWilliamson1959}.

\begin{defn}
The number of \emph{cyclic sign changes} for a vector $\mathbf
{x}=(x_{1},\ldots,x_{n})$
is defined as follows: If $\x=0$ set $S_{c}(\x)=0$. If there
is a nonzero entry, say $x_{j}\neq0$, then set
\[
S_{c}(\x)=S^{-}(x_{j},x_{j+1},
\ldots,x_{n},x_{1},\ldots,x_{j-1},x_{j}).
\]
\end{defn}

\begin{rem}\label{even-sign-changes:rm}
Obviously, this definition does not depend on the choice of the
nonzero entry
$x_{j}$. Moreover, as can be easily seen by induction, the number of
cyclic sign changes is always even.
\end{rem}

We say that a vector $\z=(z_{1},\ldots,z_{n})\in\SS^{n}$ with distinct
entries is in \emph{cyclic order} if there exist real numbers\vspace*{-1pt}
\[
t_{1}<t_{2}<\cdots<t_{n}<t_{1}+2\pi\vspace*{-1pt}
\]
such that $z_{l}=e^{it_{l}}$ for $l=1,\ldots,n$.

\begin{defn}
For a function $f:\SS\ra\RR$ and points $\mathbf{z}=(z_{1},\ldots
,z_{n})\in\SS^{n}$ in cyclic order we set\vspace*{-1pt}
\[
f(\z)=\bigl(f(z_{1}),\ldots,f(z_{n})\bigr).\vspace*{-1pt}
\]
Then we define the number of \emph{cyclic sign changes of $f$} as\vspace*{-1pt}
\[
S_{c}(f)=\mathop{\sup_{n\in\mathbb N, \mathbf{z}\in\SS^n}}
_{\mathrm{in\ cyclic\ order}}S_{c}\bigl(f(
\mathbf{z})\bigr).\vspace*{-1pt}
\]
\end{defn}

\section*{Appendix B: Proofs}\label{appb}
\begin{pf*}{Proof of Theorem \protect\ref{various-axioms:th}}
%

\renewcommand{\theLemm}{B.\arabic{Lemm}}
\begin{Lemm}\label{MR_equiv_VR:lem} \textup{(VD)} $\Leftrightarrow$ \textup{(MR)}.\vadjust{\goodbreak}
\end{Lemm}
\begin{pf}
The implication ``$\Rightarrow$'' is obvious. For
``$\Leftarrow$'', we show the assertion first for a continuous $f\in
L^1(\mathbb S)$. For this, set $F(e^{it}) = \int_0^t f(e^{is}) \,ds$.
Then $ \lim_{t\downarrow0} (F(e^{it})-F(1) )/t = f(1)=\lim_{t\uparrow0} (F(e^{i(2\pi+t)})-F(1) )/t$ and $\mathit{DF}=f$, which
gives as desired,
\[
S_c(L*f) = S_c (L* \mathit{DF} ) 
\leq
S_c (\mathit{DF} )=S_c(f).
\]

Next, assume that $f\in L^1(\mathbb S)$ is arbitrary with $S_c(f) = k
<\infty$ (the assertion is trivial for $k=\infty$). Due to continuity
of $L$, there is a sequence of continuous functions $f_n$ such that
$f_n \to f$ and $L*f_n \to L*f$ in $L^1(\SS)$ as $n\to\infty$. In
fact, because $\SS$ decomposes into $k$ disjoint intervals (some of
which may be points only) in which either $\sign f\geq0$ or $\leq0$,
we may choose the functions $f_n$ such that $S_c(f_n) \to S_c(f)$, that is,
$S_c(f_n) = k$ for all $n\geq n_0$ for some $n_0\in\mathbb N$. Now let
$m\in\mathbb N$ and assume that $S_c(L*f) \geq m$. As $\SS$ is
compact, $L*f$ is uniformly continuous. In consequence, we have that
$S_c(L*f_n)\geq m$ for $n$ sufficiently large. Hence, as all $f_n$ are
continuous, $m\leq k$. In particular, this shows that $S_c(L*f)\leq k$,
concluding the proof.
\end{pf}

To complete the proof of Theorem~\ref{various-axioms:th}, we show the
following.

\begin{Lemm}\label{CV-implies-NE:lem} Suppose that $\{L_h: h>0\}$ is a
differentiable semigroup of circular variation diminishing kernels.
Then it is not enhancing local extrema.
\end{Lemm}

\begin{pf}
Assume that $f$ is some smooth function and $z_{0}$ is an isolated
maximum of the function $z\mapsto L_{h_{0}}*f(z)$ for some $h_{0}>0$.
Let $C=L_{h_{0}}*f(z_{0})$ and $\z=(z_{1},\ldots,z_{k})$ be such
that
\[
S_{c} \bigl(L_{h_{0}}*f(\z)-C \bigr)=S_{c}
(L_{h_{0}}*f-C ).
\]
We find $w_{1},w_{2}\in\SS$ such that
\[
\tilde{\z}=(z_{1},\ldots,w_{1},z_{0},w_{2},
\ldots,z_{k})
\]
is in cyclic order and $L_{h_{0}}*f(w_{j})-C<0$ for $j=1,2$.

Assume now that $\partial_{h}L(z_{0},h_{0})>0$ were true then we
would find $h'>0$ small enough such that
\[
L_{h_{0}+h'}*f(z_{0})-C>0
\]
and also, because $h\mapsto L_{h}*f(z)$ is continuous for fixed $z$,
\[
\sign\bigl(L_{h_{0}+h'}*f(z)-C\bigr)=\sign\bigl(L_{h_{0}}*f(z)-C
\bigr)\qquad \mbox{for }z=w_{1},w_{2},z_{1},
\ldots,z_{n}.
\]
This would imply
\begin{eqnarray*}
S_{c} (L_{h_{0}+h'}*f-C ) \geq S_{c}
\bigl(L_{h_{0}+h'}*f(\tilde {\z})-C \bigr) & =&S_{c}
\bigl(L_{h_{0}}*f(\z)-C \bigr)+2
\\
& =&S_{c} (L_{h_{0}}*f-C )+2,
\end{eqnarray*}
a contradiction, since $L_{h_{0}+h'}*f-C=L_{h'}*
(L_{h_{0}}*f-C )$.
The analog argument works for local minima.
\end{pf}
\noqed\end{pf*}

\begin{pf*}{Proof of Theorem \protect\ref{wrappedGauss-is-causal:thmm}}
With Remark~\ref{gauss-is-regular:rm}, we are left to show that the
wrapped Gaussian is variation diminishing. We proceed as follows.
First, we note that the Gaussian kernels give an exponential family on
the line and that exponential families give rise to variation
diminishing kernels on the line (cf. Brown \textit{et al.}
\cite{BrownJohnstoneMacGibbon1981}, Example~3.1). Here, we use \emph{variation
diminishing} on $\RR$ as introduced by Karlin \cite{Karlin1968} and
Brown \textit{et al.} \cite
{BrownJohnstoneMacGibbon1981}. Secondly, note that kernels made from
exponential families on the line are \emph{mean continuous} in the
following sense.

A function $K:\RR\ra[0,\infty)$ with $\int_\RR K(t)  \,dt=1$ is a
\emph
{mean continuous kernel} if
\[
2K(t) = \lim_{h\downarrow0} \bigl(K(t+h) + K(t-h) \bigr)\qquad\mbox{for all
}t\in\mathbb R.
\]

Then the assertion of Theorem~\ref{wrappedGauss-is-causal:thmm} follows
from the following.

\renewcommand{\theThhh}{B.3}
\begin{Thhh}\label{exp-fam-wrapping:thmm} Let $K$ be a mean continuous
kernel that is variation diminishing on the line. Then its \emph
{wrapped kernel}
\[
\widetilde{K}\bigl(e^{it}\bigr) = \sum_{k\in\mathbb Z}K(t+2
\pi)
\]
is variation diminishing on the circle.
\end{Thhh}

\begin{pf}
This is a consequence of the following two theorems of Mairhuber
\textit{et al.} \cite
{MairhuberSchoenbergWilliamson1959}. To this end recall the
Laguerre--P\'olya class of entire functions $\Psi:\CC\ra\CC$ of form
\[
\Psi(s)=e^{-\alpha s^{2}+\beta_{0}s}\prod_{k=1}^{\infty}(1+
\beta _{k}s)e^{-\beta_{k}s} ,
\]
where $\alpha\geq0$ and $\beta_{k}$ are real such that
\[
0<\alpha+\sum_{k}\beta^2_{k}<
\infty.
\]
The first theorem connects functions in
the Laguerre--P\'olya class to mean continuous variation diminishing
kernels on the line.

\renewcommand{\theThhhh}{B.4}
\begin{Thhhh}[(Mairhuber \textit{et al.} \cite{MairhuberSchoenbergWilliamson1959}, Theorem~B)]\label
{thmm:causality}
If $\Psi$ belongs to the Laguerre--P\'olya
class then there exists an $\ve>0$ such that $\nicefrac{1}{\Psi}$
is holomorphic in the open strip
%
\renewcommand{\theequation}{\arabic{equation}}
\setcounter{equation}{11}
\begin{equation}
S=\bigl\{z\in\CC:-\ve<\Re(z)<\ve\bigr\}.\label{eq:OpenStrip}
\end{equation}
Further, there exists a mean continuous variation diminishing kernel
$\Lambda:\RR\ra\RR$
such that
%
\begin{equation}
\frac{1}{\Psi(z)}=\int_{-\infty}^{\infty}e^{-zt}
\Lambda (t)\,dt\label{eq:1/Psi}
\end{equation}
holds for all $z\in S$.
Conversely, if $\Lambda:\RR\ra\RR$ is a mean continuous variation
diminishing kernel
then there exists a function $\Psi$ in the Laguerre--P\'olya class such
that equation (\ref{eq:1/Psi}) holds in an open strip of the form
(\ref{eq:OpenStrip}).
\end{Thhhh}

The second theorem completes the proof of Theorem~\ref
{exp-fam-wrapping:thmm} as it links Laguerre--P\'olya functions to
circular variation diminishing kernels via wrapping.

\renewcommand{\theThhhhhh}{B.5}
\begin{Thhhhhh}[(Mairhuber \textit{et al.} \cite{MairhuberSchoenbergWilliamson1959}, Theorem~3)] Every
element $\Psi
$ of the Laguerre--P\'olya
class gives rise to a mean continuous cyclic variation diminishing
kernel via
\[
\Omega\bigl(e^{it}\bigr)=\sum_{k\in\ZZ}
\frac{1}{\Psi(ik)}e^{ikt}.
\]
If $\Lambda(t)$ is a function for which (\ref{eq:1/Psi}) holds then
we have the identity
\[
\hspace*{100pt}\Omega\bigl(e^{it}\bigr)=2\pi\sum_{k\in\ZZ}
\Lambda(t+2\pi k).\hspace*{150pt}\qed
\]
\end{Thhhhhh}
\noqed\end{pf}
\noqed\end{pf*}

\begin{pf*}{Proof of Theorem \protect\ref{causal-is-wrappedGauss:thmm}}
In this proof, we follow the argument of Lindeberg \cite
{Lindeberg2011} starting
with the following lemma which is a special case of Engel and Nagel \cite{EngelNagel2000}, Lemma~1.3.
and Theorem~1.4. To this end, recall the property of
strong continuity (SC) introduced in Remark~\ref{strong-continuity:rm}.

\renewcommand{\theLemmm}{B.6}
\begin{Lemmm}\label{almost-heat-eqn:lem}
For a strongly continuous semigroup of kernels $L_{h}:\SS\ra\RR$,
$h> 0$ there is a 
(not necessarily bounded) closed
linear operator $\mathcal{A}:L^{2}(\SS)\ra L^{2}(\SS)$ defined on a
dense domain $\mathcal{D}(\mathcal{A})\subset L^{2}(\SS)$, such that
for $f\in\mathcal{D}(\mathcal{A})$ we have
\[
\mathcal{A}f=\lim_{h\downarrow0}\frac{L_{h}*f-f}{h} ,
\]
as well as for all $h> 0$ that
%
\renewcommand{\theequation}{\arabic{equation}}
\setcounter{equation}{13}
\begin{equation}
\label{pre-heat:eqn} \partial_h (L_h*f) = \mathcal
{A}(L_h*f).
\end{equation}
\end{Lemmm}

The operator $\mathcal{A}$ is called the \emph{infinitesimal generator}
of the semigroup $\{ L_{h}:h> 0\}$.
By an argument analogous to that of Remark~\ref{rem:diffSemiGroup}, we
see $C^\infty(\SS)\subset\mathcal{D}(\mathcal{A})$.

For $w\in\SS$ let $\Delta_{w}:C^{\infty}(\SS)\ra C^{\infty}(\SS)$
be the shift operator defined by $\Delta_{w}f(z)=f(w^{-1}z)$. Then
it is easy to see that $\mathcal{A}$ commutes with $\Delta_{w}$
since
\[
\Delta_{w} \biggl(\frac{L_{h}*f-f}{h} \biggr)=\frac{L_{h}*(\Delta
_{w}f)-\Delta_{w}f}{h}.
\]
Moreover, the identity $R(L_{h}*f-f)=L_{h}*(Rf)-Rf$ (by symmetry of
$L_h$) implies that
$\mathcal{A}$ also commutes with the reflection operator $R$, defined
via $Rf(z)=f(z^{-1})$.

In the next step, we show that $\mathcal{A}$ is a differential
operator. To this end, we exploit Peetre \cite{Peetre1959}, Th\'eor\`eme,
stating that $\mathcal{A}$ is a differential operator if and only if it
has the following property (of a \emph{local} operator)
\[
\begin{tabular}{p{300pt}@{}}
$\mathcal{A}f(z_0)=0$ for any $z_0\in\SS$ and smooth function
$f:\SS
\to\RR$ vanishing in a neighborhood $U$ of $z_0$.
\end{tabular}
\]
Indeed, assuming that the smooth function $f:\SS\to\RR$ vanishes in a
neighborhood $U$ of $z_0$, let $g:\SS\to\RR$ be smooth, having a
nondegenerate maximum at $z_0$ with support in $U$. In consequence,
for every $\beta\in\RR$ the function $r= \beta f + g$ has a
nondegenerate maximum at $z_0$, and hence $\partial_h|_{h=0}
(L_h*r)(z_0) \leq0$ by Lemma~\ref{CV-implies-NE:lem} which implies by
Lemma~\ref{almost-heat-eqn:lem} that $\mathcal{A}(r)(z_0)\leq0$. By
linearity,
\[
\beta\mathcal{A}(f) (z_0) + \mathcal{A}(g) (z_0)
\leq0,
\]
where the left-hand side can be given any sign (as $\beta\in\RR$)
unless $\mathcal{A}(f)(z_0)=0$. In consequence, we have shown
%
\renewcommand{\theequation}{\arabic{equation}}
\setcounter{equation}{14}
\begin{equation}
\mathcal{A}=\sum_{0\leq k\in\ZZ}a_{k}D^{k}\label{eq:Af}
\end{equation}
with suitable $a_k \in\RR$, ($k=0,1,\ldots$) at least in each chart.
Using $\mathcal{A}\Delta_{w}=\Delta_{w}\mathcal{A}$
we can conclude that (\ref{eq:Af}) holds globally.

In the penultimate step, we show that $a_{k}=0$ in (\ref{eq:Af}) for $k>2$.
Let $f$ be a smooth function on $\SS$ such that
\[
f\bigl(\exp(it)\bigr)=t^{2}+\beta t^{n}
\]
for $t$ in some neighborhood of zero and for some integer $n>2$.
Since $f$ has an isolated local minimum at $1$ for all $\beta$
we have by Lemmata \ref{CV-implies-NE:lem} and \ref
{almost-heat-eqn:lem} that $\partial_h|_{h=0} (L_h*f)(1)\geq0$
(arguing as above), and hence $\mathcal{A}f(1)\geq0$. On the other
hand, we have $\mathcal{A}f(1)=a_{n}\beta$ and hence $a_{n}\neq0$
would give a contradiction.

If we set $n=0$ in the above argument, we conclude that $a_{0}=0$.

Finally, the identity $R\mathcal{A}=\mathcal{A}R$
yields $a_{1}=0$ and, therefore, $\mathcal{A}=a_{2}D^{2}$. In
consequence, for $f\in C^\infty(\SS)$, (\ref{pre-heat:eqn}) is a
multiple of the heat equation (\ref{heat:eq}) with initial condition
$f$, which proves
the theorem.
\end{pf*}

\begin{pf*}{Proof of Theorem \protect\ref{thmm:circ-causality}}
In order to see monotonicity, let $g_{h}(z)$ be either $D^{m}f_{h}(z)$
or $D^{m}f_{h}^{(n)}(z)$
for $h>0$ corresponding to case (i) and for case (ii) set
$g_{0}(z)=D^{m}f(z)$. Now
let $h_{2}>h_{1}>0$ in case (i) or $h_{2}>h_{1}\geq0$ in case (ii). Then
$g_{h_{2}}=K_{h_{2}-h_{1}}*g_{h_{1}}$
and hence
%
\renewcommand{\theequation}{\arabic{equation}}
\setcounter{equation}{15}
\begin{equation}
S_{c}(g_{h_{2}})\leq S_{c}(g_{h_{1}})
\label{monotonicity:ineq}
\end{equation}
since $K_{h}$ is cyclic variation diminishing for all $h>0$ by Theorem~\ref{causal-is-wrappedGauss:thmm}.

To see right continuity, suppose that $S_{c}(g_{h})=2k$ for integer
$k\geq0$ and some $h>0$ in case (i)
or $h\geq0$ in case (ii).
If $k>0$, there exists $\z=(z_{1},\ldots,z_{2k})\in\SS^{2k}$
such that $\sgn g_{h}(z_{j})=(-1)^{j}$. Since $h\mapsto g_{h}(z)$
is continuous for all $z\in\SS$ there is an $\varepsilon>0$ such that
$\sgn g_{h'}(z_{j})=(-1)^{j}$ for all $h'\in[h,h+\varepsilon)$.
Therefore,
\[
S_{c}(g_{h'})\geq S_{c} \bigl(g_{h'}(
\z) \bigr)=2k.
\]
Together with the monotonicity (\ref{monotonicity:ineq}), this proves
$S_{c}(g_{h'})=2k$, and thus right continuity.
If $k=0$, we have at once $S_{c}(g_{h'})\geq0 = S_{c}
(g_{h}
)$ for all $h'\geq h$ again yielding right continuity.
\end{pf*}

\begin{pf*}{Proof of Theorem \protect\ref{thmm:Weak_Conv}}
We first show convergence of the finite dimensional distributions (I),
then weak convergence (II), and finally a.s. continuity of sample paths.

I: Let $z_{1},\ldots,z_{k}\in\SS$ and $t_{1},\ldots,t_{k}\in\RR$
and set
\[
Y_{l}=\sum_{i=1}^{k}t_{i}
\bigl(D^{m}K_{h}\bigl(z_{i}X_{l}^{-1}
\bigr)-D^{m}f_{h}(z_{i}) \bigr),\qquad l=1,\ldots,n.
\]
Then $EY_{l}=0$ and
\[
\var Y_{l}  =\frac{1}{n}\sum_{i,j=1}^{k}t_{i}t_{j}
\cv(z_{i},z_{j};h).
\]
In consequence, we have that
\[
Z_{n}  :=n^{{1}/{2}}\sum_{i=1}^{k}t_{i}
\bigl(D^{m}f_{h}^{(n)}(z_{i})-D^{m}f_{h}(z_{i})
\bigr) =n^{-
{1}/{2}}\sum_{l=1}^{n}Y_{l}
\]
is asymptotically normal with zero mean and covariance $\sum_{i,j=1}^{k}t_{i}t_{j}\cv(z_{i},z_{j};h)$.
Since $t_{1},\ldots,t_{k}\in\RR$ are arbitrary, by the Cram\'er--Wold
device we conclude that
\[
U_{n}(z_{1},\ldots,z_k):= \bigl(
n^{{1}/{2}} \bigl(D^{m}f_{h}^{(n)}(z_{i})-D^{m}f_{h}(z_{i})
\bigr) \bigr) _{i=1}^{k}
\]
is asymptotically normal with zero mean and covariance matrix $
(\cv
(z_{i},z_{j};h) )_{i,j=1}^k$.

II: To establish weak convergence, it now suffices to show that
\[
n^{{1}/{2}} \bigl(D^{m}f_{h}^{(n)}(z)-D^{m}f_{h}(z)
\bigr)_{z\in
\SS}
\]
is tight. To this end, we give a bound for the second moments
of the increments. Observe that for $z_{1},z_{2}\in\SS$,
%
\renewcommand{\theequation}{\arabic{equation}}
\setcounter{equation}{16}
\begin{eqnarray}
\label{eq:Estimate_U_n} E\bigl(U_{n}(z_{1})-U_{n}(z_{2})
\bigr)^{2} & =&\var \bigl\{ D^{m}K_{h}
\bigl(z_{1}X^{-1}\bigr)-D^{m}\bigl(K_{h}
\bigl(z_{2}X^{-1}\bigr)\bigr) \bigr\}
\nonumber
\\
& \leq& E \bigl\{ D^{m}K_{h}\bigl(z_{1}X^{-1}
\bigr)-D^{m}\bigl(K_{h}\bigl(z_{2}X^{-1}
\bigr)\bigr) \bigr\} ^{2}
\\
& \leq&\bigl\Vert D^{m+1}K_{h}\bigr\Vert ^2_{\infty}\,d(z_{1},z_{2})^{2}\nonumber
\end{eqnarray}
using the mean value theorem for the last inequality where $d$ denotes
the intrinsic (geodesic) distance in $\SS$. Now we argue with Theorem~8.6 (page~138) and Theorem~8.8 (page~139) of Ethier and Kurtz \cite
{EthierKurtz2009}. First
note that (\ref{eq:Estimate_U_n}) yields condition (8.36) of Theorem~8.8. which asserts condition (8.29) of Theorem~8.6 as well as condition
(8.39). Taking the conditional expectation of the latter gives
condition (8.28) of Theorem~8.6. Directly from (\ref{eq:Estimate_U_n}),
we infer the third condition (8.30) of (b) of Theorem~8.6. From the fact
that in particular for every fixed $z\in\SS$, $U_n(z)$ is
asymptotically normal, we obtain condition (a) of Theorem~7.2 (Ethier
and Kurtz \cite{EthierKurtz2009}, page~128). Thus, we conclude with Theorem~8.6. that the
sequence $U_n(z)$, $n\in\mathbb N$ of processes $z\in\SS$ is
relatively compact, yielding the desired convergence result. 

III: Since $G_h$ is the weak limit of a sequence of a.s. continuous
processes, there is a version of $G_h$ that also has continuous sample
paths with probability one.
\begin{pf*}{Proof of Theorem \protect\ref{thmm:TestThm}}
Let $\mathcal{S}_{0}\subset\{(h,z):h\geq h_{0},z\in\SS\}$ for which
$H_{0}^{(h,z)}$ holds. Then by Corollary~\ref{weak-conv:cor},
the probability that one or more of the
null hypotheses are falsely rejected is
\begin{eqnarray*}
&& P \bigl(\exists(h,z)\in\mathcal{S}_{0}\mbox{ s.t. }\bigl\llvert
D^{m}f_{h}^{(n)}(z)\bigr\rrvert \geq
n^{-{1}/{2}}q_{(1-\alpha
)} \bigr)
\\
&&\quad \leq P \Bigl(\sup_{h\geq h_{0},z\in\SS}n^{{1}/{2}}\bigl\llvert
D^{m}f_{h}^{(n)}(z)-D^{m}f_{h}(z)
\bigr\rrvert \geq q_{(1-\alpha)} \Bigr)
\\
&&\quad \ra P \Bigl(\sup_{z\in\SS}|G_{h_{0}}|\geq
q_{(1-\alpha)} \Bigr)=\alpha \qquad\mbox{as }n\to\infty.
\end{eqnarray*}

For the last assertion, note that if $D^{m}f_{h}(z)$ has $2k\geq2$
sign changes then we can find a vector $(z_{1},\ldots,z_{2k})$ in
cyclic order such that $\sgn D^{m}f_{h}(z_j)=(-1)^{j}$ ($j=1,\ldots
,2k$). Because of
$q_{(1-\alpha)}n^{-{1}/{2}}\ra0$ and
\begin{eqnarray*}
\max_{1\leq j\leq2k}\bigl\llvert D^{m}f_{h}^{(n)}(z_{j})-D^{m}f_{h}(z_{j})
\bigr\rrvert  \leq\sup_{z\in
\SS
}\bigl\llvert D^{m}f_{h}^{(n)}(z)-D^{m}f_{h}(z)
\bigr\rrvert =\mathcal {O}_{P}\bigl(n^{-{1}/{2}}\bigr)
\end{eqnarray*}
the above test procedure will correctly conclude the sign of $D^{m}f_{h}(z_{j})$
for all $j$ asymptotically with probability one.
\end{pf*}
\noqed\end{pf*}

\section*{Appendix C: Numerical considerations}\label{appc}

Let $K:\mathbb{R}\ra\mathbb{R}_{\geq0}$ be a kernel on the real axis
and $\tilde{K}:[-\pi,\pi]\ra\mathbb{R}_{\geq0}$ the corresponding
wrapped kernel defined by $\tilde{K}(x)=\sum_{j\in\ZZ}K(x+2\pi j)$
and for $C\in\ZZ$ the cut off kernel $\tilde{K}^{(C)}(x)=\sum_{|j|\leq
C}K(x+2\pi j)$.

For points $x,x_{1},\ldots, x_{n}\in[-\pi,\pi]$, we define
\[
f(x) =\frac{1}{n}\sum_{i=1}^{n}
\tilde{K}(x-x_{i}),\qquad f^{(C)}(x) =\frac{1}{n}\sum
_{i=1}^{n}\tilde{K}^{(C)}(x-x_{i}).
\]
We estimate
\begin{eqnarray*}
&&\bigl|f(x)-f^{(C)}(x)\bigr|
\\
&&\quad =\Biggl|\frac{1}{ n}\sum_{i=1}^{n}\sum
_{|j|>C}K(x-x_{i}+2\pi j)\Biggr| \leq
\frac{1}{n}\sum_{i=1}^{n}\sum
_{|j|>C}\bigl|K(x-x_{i}+2\pi j)\bigr|
\\
&&\quad \leq\sum_{j> C} \Bigl( \sup_{x\in[2\pi(j-1) ,2\pi(j+1)]}
\bigl|K(x)\bigr|+\sup_{x\in[-2\pi(j+1),-2\pi(j-1)]}\bigl|K(x)\bigr| \Bigr) =: \delta_{K}(C).
\end{eqnarray*}
Now we want to estimate $\delta_{K}(C)$ for the derivative of the
Gaussian kernel
\[
\phi_{h}(x)=\frac{1}{\sqrt{2\pi h}}e^{-{x^{2}}/{(2h)}}.
\]
Note that $\partial_{x}\phi_{h}$ has two extremal points at $\pm
\sqrt{h}$
and is monotonically increasing on $(-\infty,-\sqrt{h})$ and monotonically
decreasing on $(\sqrt{h},\infty)$. Hence, if $2\pi(C-1)\geq\sqrt{h}$
we have
\[
\sup_{x\in[-2\pi(j+1),-2\pi(j-1)]\cup[2\pi(j-1),2\pi
(j+1)]}\bigl|\partial _{x}\phi_h(x)\bigr|
\leq- \frac{1}{2\pi}\int_{ 2\pi(j-2)}^{2\pi
(j-1)}
\partial_{x}\phi_h(y) \,dy
\]
for all $j\geq C+1$ and, therefore,
\begin{eqnarray*}
\delta_{\partial_{x}\phi_h}(C)  \leq- \frac{1}{\pi}\sum
_{j\geq C+1} \int_{2\pi(j-2)}^{2\pi
(j-1)}
\partial_{x}\phi_{h}(y) \,dy =\frac{1}{\pi}
\phi_{h}\bigl(2\pi(C-2)\bigr).
\end{eqnarray*}

\renewcommand{\thethmmmmm}{C.1}
\begin{thmmmmm}
For any empirical measure $\nu=\frac{1}{n}\sum_{i=1}^{n}\delta_{x_{i}}$
with points $x_{1},\ldots,x_{n}\in[-\pi,\pi]$, we have
\[
\biggl\Vert \frac{\partial(\tilde{\phi}_{h}*\nu)}{\partial x}-\frac
{\partial
(\tilde{\phi}_{h}^{(C)}*\nu)}{\partial x}\biggr\Vert _{\infty}\leq
\frac
{1}{\pi
}\phi_{h}\bigl(2\pi(C-1)\bigr)
\]
provided that $2\pi(C-1)\geq\sqrt{h}$.
\end{thmmmmm}

In consequence, for the bandwidths $0.01 \leq h \leq3.5$ considered in
our applications in Sections~\ref{persistence:scn} and \ref
{application:scn}, an error less than
\[
\frac{e^{-10 }}{\pi\sqrt{2\pi h}}\leq10^{-4}
\]
can be achieved for $2\pi(C-1) \geq\sqrt{20 h}$, for example, $C=3$.
\end{appendix}

\section*{Acknowledgments} All authors gratefully acknowledge
support by DFG SFB 755 ``Nanonscale Photonic Imaging'' Projects B8, A4
and A6. Furthermore, M. Sommerfeld is grateful for support by the
``Studienstiftung des Deutschen Volkes''. A. Munk also gratefully
acknowledges support by DFG FOR 916. F. Rehfeldt gratefully acknowledges
funding through a Feodor--Lynen fellowship from the
Alexander-von-Humboldt foundation. S. Huckemann and A. Munk are further
indebted
to the Volkswagen Stiftung. Finally, M.~Sommerfeld and S. Huckemann are thankful for
support from the SAMSI 2013--2014 program on Low-Dimensional Structure
in High-Dimensional Systems.

%

%




\printhistory
\end{document}